%
%
%
%
%
\RequirePackage{fix-cm}
\documentclass[smallextended]{svjour3}       
\smartqed  
\usepackage{graphicx}
%
%
%
%
%


\usepackage{graphicx}
\usepackage{courier}
\usepackage{epstopdf}
\usepackage{color}
\usepackage{csquotes}
\usepackage{enumerate}
\usepackage{amsmath}
\usepackage{amsfonts}
\usepackage{amssymb}
\usepackage[subnum]{cases}
\usepackage{booktabs} 
\usepackage{pifont}

\usepackage{amssymb}
\usepackage{pifont}
\newcommand{\cmark}{\ding{51}}%
\newcommand{\xmark}{\ding{55}}%

\usepackage {tikz}
\usetikzlibrary {positioning}
\definecolor {processblue}{cmyk}{0.96,0,0,0}
\usepackage{lipsum,adjustbox}

\usepackage[subnum]{cases}
\usepackage{color,soul}
\usepackage{subfig}

\newtheorem{axiom}{Axiom}
   


\begin{document}

\title{A Generic Axiomatic Characterization of Centrality Measures in Social Network
}


\author{Sambaran Bandyopadhyay         \and
        M. Narasimha Murty               \and
        Ramasuri Narayanam
}


\institute{F. Author \at
              IBM Research \\
              \email{sambband@in.ibm.com}           
           \and
           S. Author \at
              Indian Institute of Science, Bangalore \\
              \email{mnm@csa.iisc.ernet.in}
           \and
           T. Author \at
           	  IBM Research \\
           	  \email{nrsuri@gmail.com}             
}

\date{Received: date / Accepted: date}

\maketitle

\begin{abstract}
Centrality is an important notion in complex networks; it could be used to characterize how influential a node or an edge is in the network. It plays an important role in several other network analysis tools including community detection. Even though there are a small number of axiomatic frameworks associated with this notion, the existing formalizations are not generic in nature. In this paper we propose a generic axiomatic framework to capture all the intrinsic properties of a centrality measure (a.k.a. centrality index). We analyze popular centrality measures along with other novel measures of centrality using this framework. We observed that none of the centrality measures considered satisfies all the axioms.

\keywords{Centrality \and Axiomatic Framework \and Networks}

\end{abstract}


\section{Introduction}

Analyzing complex network is important for dealing with several real-world applications. Online social networks (e.g., Facebook or Flickr), collaboration networks, email networks, trading networks, genetic networks and  R \& D networks \cite{Easley:2010,Brandes:2005} are some important examples of complex networks. Social networks are social structures made up of groups or communities of individuals and connections among these individuals. It  is convenient to represent such networks using graphs, where nodes represent entities in the networks and edges represent the connections among these entities; the underlying hope is that well-known properties of graphs can be exploited in the analysis of the networks. A significant amount of work on social network analysis is devoted to understanding the role played by nodes/edges, with respect to either their structural placement in the network or their behavioral influence over others in the network. To this end, it is important to rank nodes/edges in a given network based on either their positional power or their behavioral influence. Traditionally, nodes with high rank are referred to as {\em influential nodes}. 

We present two motivating examples to understand the importance of finding the influential nodes in real world social networks. The first example deals with the diffusion of information in social networks wherein it is required to initially target a few influential individuals in the network who will trigger a massive cascade of influence through which friends will recommend the product to other friends, and many individuals will ultimately try it. The second example deals with co-authorship social networks wherein it may be of interest to find the most prolific researchers since they are most likely to be the trend setters for breakthroughs. The common goal in these two example settings is to find a set of influential nodes given a well defined context in the social network. To achieve this objective, there exist several well known ranking mechanisms in the literature, ranging from the well known centrality measures (a.k.a. centrality indices) from social sciences such as degree centrality, closeness centrality, clustering coefficient, and betweenness centrality \cite{faster_betweenness} to Google PageRank \cite{Brin:98}. A pragmatic reason behind the existence of such influential nodes and edges is the community structure exhibited by most of the networks; an influential node generally binds together nodes in the community or forms a bridge between two communities.

Of late, a large variety of centrality measures \cite{Easley:2010,Brandes:2005} have been proposed by the research community whenever the existing measures in the literature are proved to be inadequate to satisfactorily serve the needs of emerging real-life applications. Though these centrality measures offer new insights, the lack of a theoretical underpinning makes it difficult to choose the right centrality measure for a given context. Towards this end, there exists some effort in the literature \cite{skibski_2017},\\ \cite{jair_altman_2008}, \cite{boldi_2014},\\ \cite{skibski2016attachment} in terms of developing axiomatic frameworks to better understand the properties of these centrality measures. However, these theoretical explorations are limited to only certain specific scenarios. To the best of our knowledge, there is no generic theoretical framework to understand the structural properties exhibited by the class of centrality measures in networks. In this paper, we attempt to address this research gap by proposing a novel axiomatic framework.

\subsection{Key Contributions}
Following are the contributions made in this paper.
\begin{itemize}
    \item Existing axiomatic approaches are not generic. They have either been designed to characterize a particular centrality measure, or applicable to networks of specific structures. We propose a novel axiomatic framework which captures all the intrinsic properties that a centrality measure is expected to satisfy.
    \item We analyze well-known centrality measures in the proposed framework. Surprisingly most of them do not satisfy one or more of the fundamental properties of our framework.
    \item Also we have designed some intuitively appealing centrality measures that satisfy some of the axioms of the framework. We believe that our framework would help researchers to develop and analyze the quality of new centrality measures for different types of networks.
\end{itemize}



\section{Notation and Terminology} \label{sec:def}
In this section, we provide the necessary background and notation used in the subsequent sections of the paper. Social networks are typically represented by graphs. In this work, we have concentrated mainly on the representation of social networks using undirected and unweighted graphs. Following are some of the useful definitions borrowed from graph theory. Details can be found in any standard graph theory text book such as \cite{diestel2000graph}.

A graph, or a network, is a pair, $G = (V,E)$, where $V$ is the set of $n=|V|$ nodes (a.k.a. vertices), and $E$ is the set of edges. We consider only the set of undirected graphs in this paper. Sometime the set of nodes $V$ in the graph $G$ is also denoted as $V(G)$ and similarly the set of edges $E$ is denoted by $E(G)$. The set of all possible graphs with nodes $V$ is denoted by $\mathcal{G}^V$. An edge which is incident on nodes $u$ and $v$ is denoted by $\{u,v\}$. We say that two vertices $u$ and $v$ are adjacent to each other if $\{u,v\} \in E$. For an undirected graph, $\{u,v\}$ = $\{v,u\}$. The degree of a node $u$ is defined as:
\begin{align}
degree_G(u) = |\{\{u,v\} \in E : v \in V\}|
\end{align}
where $|.|$ denotes the cardinality of a set. A node $u$ in a graph $G$ is called an isolated node if $degree_G(u) = 0$.

A path $p=(u_1,\cdots,u_k)$ is a sequence of nodes such that any two consecutive nodes are connected by an edge. The length of a path is the number of edges in it. The distance $dist(u,v)$ between any two nodes $u$ and $v$ in the graph $G$ is the length of the shortest path between them. If there is no path between $u$ and $v$, it is assumed that $dist(u,v) = \infty$. The set of shortest paths between nodes $u$ and $v$ is denoted by $\Pi_s(u,v)$. A connected component of a graph is a subset of nodes such that any two nodes in the subset can be reached from one another by a path. $K(G)$ is the partition of $V$ into disjoint sets of nodes where each node induces a maximal connected subgraph in $G$, and $K_u(G)$ is the connected component containing the node $u$ in $G$. By $G[K_u(G)]$, we denote the subgraph of $G$ which contains the nodes and the edges within the component $K_u(G)$.

For any vertex $u \in V$, $N_G(u)$ denotes the set of neighbor nodes of the vertex $u$ in $G$. So, $N_G(u) = \{v \in V \; : \; \{u,v\} \in E\}$. Similarly, for any vertex $u$ in the graph $G$, we define the $h$-hop neighbor set of $u$ as $H_G^h(u) = \{v \in V : dist(u,v)=h\}$. Note that $N_G(u) = H_G^1(u)$. We extend the idea of $h$-hop neighbor sets even for a subset of nodes in the graph. For $u$ and $v$ $\in V$, the $h$-hop neighbor set of $(u,v)$ is defined as $H_G^h(u,v) = \{z \in V : min(dist(u,z), dist(v,z)) = h \}$. In case $\{u,v\} \in E$, this also denotes the $h$-hop neighbor of the edge $\{u,v\}$.

We call two graphs $G$ and $H$ to be isomorphic to each other if there is a bijection between the vertex sets of $G$ and $H$ $f : V(G) \rightarrow V(H)$ such that any two vertices $u$ and $v$ of $G$ are adjacent in $G$ if and only if $f(u)$ and $f(v)$ are adjacent in $H$. Adjacency matrix $A$ of the graph $G$ is a $n \times n$ matrix defined as:
\begin{align}
A_{i,j} = 
\begin{cases}
1 \; , \; if \;\; \{i,j\} \in E \\
0 \; , \; otherwise  
\end{cases}
\end{align} 

The pair $<\lambda,x>$ is an eigenpair of $A$ if $Ax = \lambda x$, where $\lambda$ is the eigenvalue of $A$ and $x\;(\neq 0)$ is the corresponding eigenvector.

Typically a social network is represented in the form of a graph. A centrality index or centrality measure, $F : \mathcal{G}^V \rightarrow \mathbb{R}_+^V$ , is a function that assigns to every node a non-negative real number which reflects its importance in the network. Hence the centrality of a node $u$ in the graph $G$ is denoted by $F_{u}(G)$. Typically, larger the value of this index, more important or central the node is. Some of the well known centrality measures are Degree Centrality, Closeness Centrality, Betweenness Centrality and Eigenvector Centrality \cite{spizzirri2011justification,skibski2016attachment}. 
\section{Background}
Centrality of nodes and edges is important in social network analysis\\ \cite{koschutzki2005centrality}. A node is central if several other nodes are connected to it or if it is connected to two or more different communities. Similarly an edge is central if it acts as a bridge between two communities. Based on this notion, some of the measures of centrality \cite{freeman1978centrality} which are popularly used are:
\begin{enumerate}[(a)]
\item {\em Degree centrality:} A node is central if it is better connected to other nodes or its {\em degree is high}.
\item {\em Closeness centrality:} A node is central if it is closer to the rest of nodes or the {\em average distance} between the node and the other nodes is small.
\item {\em Betweenness centrality:} A node is central if it is {\em between} two subsets of nodes or communities.
\item {\em Eigenvector centrality:} It involves a recursive characterization. A node is central if it is linked to other central nodes. Formally, if $\lambda$ is the largest eigenvalue of $A$ and $x$ is the corresponding eigenvector, then node $i$ is more central than any other node $j$ if the $i^{th}$ component of $x$ is larger than its $j^{th}$ component.
\end{enumerate}
%
\begin{figure}[h]
\includegraphics[height=6cm,width=12cm]{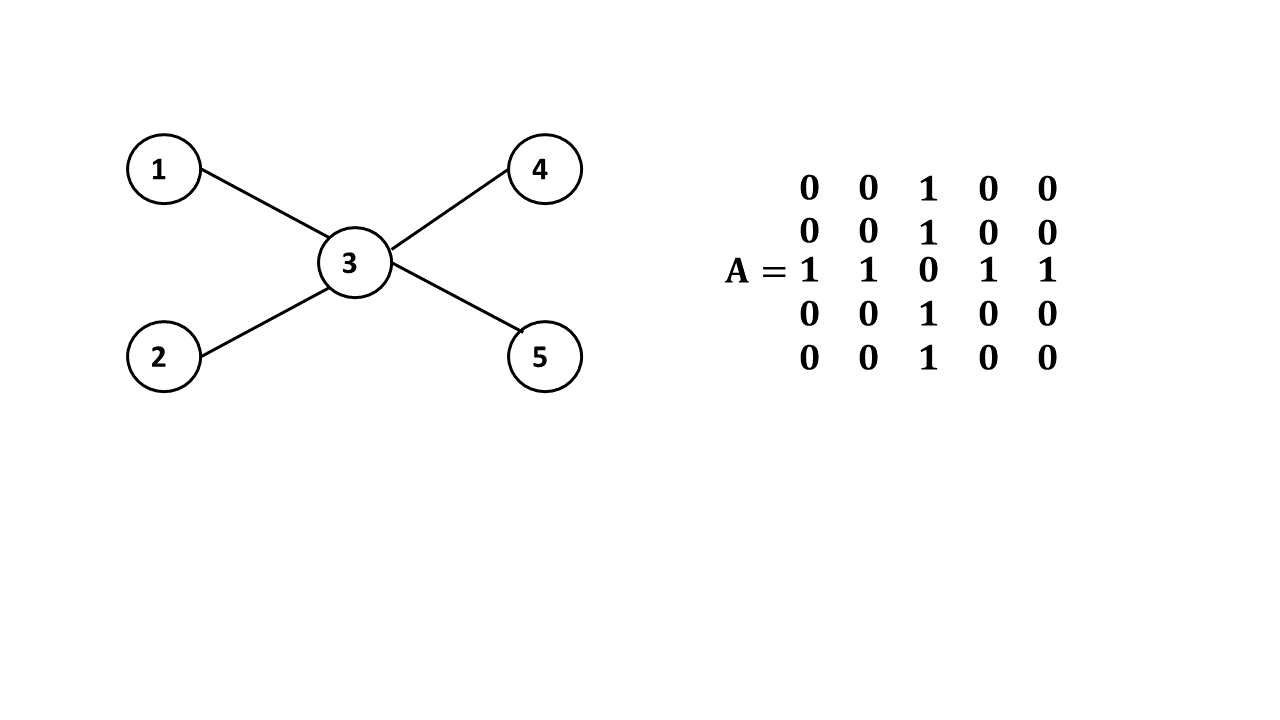}
\vspace*{-2cm}
\begin{center}
\caption{Example Network and Its Adjacency Matrix}
\label{fig:centra}
\end{center}
\end{figure}
Formal definitions of these centrality measures are given during their analysis in Section \ref{sec:sat}. We illustrate these centrality measures using a simple example network shown in Figure~\ref{fig:centra}.
\begin{table}
\begin{center}
\resizebox{0.8\textwidth}{!}{%
\begin{tabular}{c c c c c}
Node&Degree&Closeness&Betweenness&Eigenvector\\
\hline
&&&&\\
1&$1$&$\frac{5}{2}$&$0$&$\frac{1}{2  \sqrt{2}}$\\
&&&&\\
2&$1$&$\frac{5}{2}$&$0$&$\frac{1}{2  \sqrt{2}}$\\
&&&&\\
3&$4$&$4$&$6$&$\frac{1}{\sqrt{2}}$\\
&&&&\\
4&$1$&$\frac{5}{2}$&$0$&$\frac{1}{2  \sqrt{2}}$\\
&&&&\\
5&$1$&$\frac{5}{2}$&$0$&$\frac{1}{2  \sqrt{2}}$
\end{tabular}
}
\end{center}
\caption{Centrality values of the nodes in the example network in Figure \ref{fig:centra}.}
\label{table:four}
\end{table}

\begin{enumerate}[(a)]
\item In the second column of Table \ref{table:four}, the degrees of the five nodes are listed. Node 3 has the highest degree of 4 and it is the most central node.
\item In terms of closeness, node 3 is again central as its average distance to the remaining four nodes is 1; other nodes have larger average distance values. Here each edge in the network contributes to a distance of 1 unit between the end nodes.
\item Only node 3 has a non-zero value for betweenness. Every path between a pair of nodes other than 3
passes through node 3.
\item The last column depicts the eigenvector of the adjacency matrix corresponding to the eigenvalue 2.
Here also node 3 has the highest centrality value. 
\end{enumerate}
We will analyze each of them with respect to our axiomatic framework in the subsequent sections.

\section{Related Work}
In this section we examine the existing literature on centrality measures including formal treatments based on axiomatic characterizations. We also distinguish our work from the existing axiomatic approaches characterizing centrality in social networks.

Axiomatic frameworks have been used in different domains of computer science and economics, where the end goals are intuitively clear but not mathematically rigorous enough. In a typical axiomatic framework, axioms are proposed and used to capture the intrinsic properties of the underlying concept. For example, clustering \cite{jain1999data} is an extensively used tool in data mining and machine learning, yet it is an ill-posed problem. So research has been done to capture the intrinsic properties of clustering in the form of axioms \cite{kleinberg2002impossibility,zadeh2009uniqueness,bandyopadhyay2016axioms}.

 Similar approaches have been adopted in other domains such as social choice theory \cite{kelly2014arrow}, ranking and diversification \cite{gollapudi2009axiomatic}, and computational sustainability and dynamic pricing\\ \cite{bandyopadhyay2016axiomatic}. Centrality has also been used to deal with other tasks associated with the social network analysis. For example, betweenness centrality is used in community detection \cite{Jure_2014}.
 
 There are other axiomatic frameworks for centrality in networks. First we want to discuss the framework presented in \cite{boldi_2014}. The authors have proposed axioms to characterize the effect of size, density and addition of an edge in a network represented in the form of a directed graph. But they have mainly focused on the graph made by a $k$-clique and a directed $p$-cycle and compare the centrality of the nodes within that graph. Hence the scope of the framework is limited to this particular structure of the network.
 
 Second, in \cite{skibski2016attachment}, authors characterized degree centrality and attachment centrality (based on Myerson value \cite{myerson1977graphs}) in the proposed axiomatic framework. Similar to us, they also represent networks in the form of an undirected graph. But unfortunately, some of the proposed axioms in their framework are particular to degree or attachment centrality and loose significance in the broader context. For example, there is not much justification to assume that adding an edge would exactly have the same effect on the two incident nodes as stated in the Fairness Axiom. Similarly a generic reason for using a linear sum in Gain-loss Axiom has not been mentioned in the paper. We discuss and compare with the Monotonicity Axiom in more detail in Section \ref{sec:axioms}. 
 
 Research has also been done on the axiomatic foundations for ranking systems in a directed graph \cite{jair_altman_2008} in the context of Page Rank and voting ranking systems. Recently an axiomatic framework for game-theoretic network centralities \cite{skibski_2017} is proposed, where the authors establish a link between the game-theoretic centrality measures and classical centrality measures. Naturally the inherent set up of their analysis is significantly different from our framework.
 
 
 To summarize, the existing frameworks are not generic enough to deal with a variety of centrality measures. We propose
 a generic framework that can be used to analyze all the popular centrality measures. In addition, we consider some more intuitively appealing centrality measures. Our observation is that none of the centrality measures considered is able to satisfy all our axioms.
 \nocite{sabidussi1966centrality}

\section{Proposed Axiomatic Framework} \label{sec:axioms}
In this section, we capture all the intrinsic properties of a centrality measure using a set of axioms.

First axiom in our framework is a fundamental property of many graph theoretic measures. In the context of our work, it ensures that the centrality measure of a node in a graph should depend only on the structure of the graph. Hence if two graphs are isomorphic to each other, the centrality values of the corresponding nodes in the two graphs is the same.
\begin{axiom}
\label{ax:iso}
\textbf{Isomorphic invariant:} If two graphs $G$ and $H$ are isomorphic and $f : V(G) \rightarrow V(H)$ being the structure preserving bijection, then $F_v(G) = F_{f(v)}(H)$, $\forall v \in V(G)$.
\end{axiom}
Centrality measure is explicitly constrained to be a structural index in\\ \cite{koschutzki2005centrality},  which by definition satisfies Axiom \ref{ax:iso}.  We put this property in the form of an axiom to provide a rigorous characterization of centrality.

Centrality of a node depends on the way the node is connected to the other nodes in the graph. Naturally it should depend only on the maximally connected component in which it is located. So the centrality of the node would remain unchanged even if we only consider the subgraph induced by the corresponding maximally connected component.
\begin{axiom}
\label{ax:loc}
\textbf{Locality:} For every graph $G = (V,E)$ and every node $v \in V$, the centrality of $v$ depends solely on $G[K_v(G)]$. That is,
\begin{align*}
F_v(G) = F_v(G[K_v(G)])
\end{align*}
\end{axiom}
Axiom \ref{ax:loc} is also present in \cite{skibski2016attachment}. It is a natural extension of centrality measures from connected to disconnected graphs in general.

In a graph, isolated nodes are completely separated from all other nodes, and hence they do not play any role in the connectivity of the network. Naturally the centrality of an isolated node would be the least possible value consistent with the functional definition of a centrality measure as given in Section \ref{sec:def}. Hence it leads us to the following axiom.
\begin{axiom}
\label{ax:min}
\textbf{Isolated Minima:} For any graph $G= (V,E)$, $F_v(G) = 0$, where $v$ is a isolated node in $G$.
\end{axiom}

The first three axioms of our framework are basic properties of a centrality measure. Further, the edges of a network play an important role to determine the centrality of a node in the network. So we capture the effect of edges in the centrality of a node in the next two axioms. As the connectivity of a node in the network increases, its centrality should also increase. One way to increase the connectivity of a node is by having more edges incident on it. Following axiom captures the affect, of having a new edge, on the two end point nodes of the network.
\begin{axiom}
\label{ax:mon}
\textbf{Edge Monotonicity:} For every graph $G = (V,E)$, and two nodes $u$ and $v$ such that $\{u,v\} \notin E$, $F_u((V, E \cup \{u,v\})) > F_u((V,E))$ and $F_v((V, E \cup \{u,v\})) > F_v((V,E))$.
\end{axiom}
It is important to understand that adding an edge can have different impacts on the other nodes of the network. The centrality of some other nodes might increase or decrease, but it is fair to assume that the nodes which are getting connected because of the new edge would always get benefited, and become more central to the network. It is also worthwhile to mention that in \cite{skibski2016attachment}, authors have proposed an axiom which says that, adding an edge does not decrease the centrality of any node in the graph. Clearly this is not a valid assumption. For example, in a road traffic network, as a new road (edge) is laid between two terminal places, say X and Y (nodes), a portion of the traffic (importance or centrality) from some nearby place may deviate and start to follow the new road. Naturally the importance of X and Y increases while it decrease for some other places.

In the last axiom, we examine the effect of adding an edge on the immediate two nodes which get joined by the new edge. The natural question to ask here is, whether it is possible to generalize the effect on the other nodes of the network. Typically in social science and psychology ~\cite{Ana_2001}, the effect of any change in the network can have significant impact on the immediate neighbors. This effect diminishes as the distance from the point of impact increases. This observation can be formally stated in the form of the following axiom.
\begin{axiom}
\label{ax:hop}
\textbf{Diminishing Impact:}Let us add an edge $\{u,v\}$ to $E$ of the graph $G=(V,E)$, where $u, v \in V$ and $\{u,v\} \notin E $, and consider the new graph as $G'=(V, E \cup \{u,v\})$. Then for any two non-negative integers $h$ and $\bar{h}$ such that $h < \bar{h}$ and there exist nodes $z_h \in H_G^{h}(u,v)$ and $z_{\bar{h}} \in H_G^{\bar{h}}(u,v)$, then $|F_{z_h}(G') - F_{z_h}(G)| > |F_{z_{\bar{h}}}(G') - F_{z_{\bar{h}}}(G)|$.
\end{axiom}
It is important to note that, the effect of adding the new edge $\{u,v\}$ as stated in the above axiom, can increase or decrease the centrality measure of any other node $z$ in the network, but the absolute value of this change would diminish over the distance from the edge $\{u,v\}$.

Also combining Axiom \ref{ax:mon} and Axiom \ref{ax:hop}, we can conclude that maximum change of centrality would occur for the nodes $u$ and $v$, as both of them belong to the $0$-hop neighborhood of $(u,v)$, and also the change is positive as the centrality values of both nodes increase. 

In the last two axioms, we discuss the effect of adding an edge to the network. But so far we have not discussed how the centrality measures of different nodes would compete with each other within the same network. A node is clearly central if it is connected to more number of nodes which are central themselves. For example in a co-authorship network, a researcher who has written research papers with some of the leading researchers in the domain is also assumed to be influential in the community. On the other hand, a node in a very small component of the network may not have a central role compared to a node which is at the center of a much bigger component of the network. The idea is captured in the following axiom.
\begin{axiom}
\label{ax:struc}
\textbf{Structural Consistency:} Consider a graph $G=(V,E)$ and any two vertices $u, v \in V$ such that $|N_G(u)| \geq |N_G(v)|$. If there exists a subset $\bar{N}_G(u) \subseteq N_G(u)$ with $|\bar{N}_G(u)| = |N_G(v)|$, such that there is a bijection $h$ which attaches each vertex $a \in \bar{N}_G(u)$ to a unique vertex $h(a) \in N_G(v)$ so that $F_a(G) > F_{h(a)}(G)$, then $F_u(G) > F_v(G)$. 
\end{axiom}



As one can see, we have not considered all the corner cases in the design of a centrality measure in our axiomatic framework. For example, in Axiom \ref{ax:mon}, we do not comment on the effect of adding an edge to the centrality of any randomly taken node in the network. In Axiom \ref{ax:hop}, we do not impose any condition on the change of centrality of the nodes which are in the same hop neighborhood to the newly added edge. Similarly in Axiom \ref{ax:struc}, we carefully avoid the case when $|N_G(u)| < |N_G(v)|$ and such a bijection exists from $N_G(u)$ to $\bar{N}_G(v) \subseteq N_G(v)$ with the same conditions. We felt that the results of these cases can differ significantly from one use case to another. So we leave these cases to be taken care of by the designer of the centrality measure based on the properties of the specific network under consideration. Thus our axiomatic framework is a generic one which is applicable to different types of networks, and also it gives freedom to the network designers to impose extra conditions for the particular use case they are handling.

\section{Satisfiability of the Axioms} \label{sec:sat}
In this section we will characterize different measures of centrality with respect to our axiomatic framework. We also propose novel centrality measures in the process of this analysis. We state some important observations in the form of lemmas in this section. We skipped the proof of some of these lemmas when it is trivial. 

\subsection{Uniform Centrality}
To begin with, let us define a simple centrality measure, called uniform centrality (UC), as below:
\begin{align}
UC_u(G) = \beta
\end{align}
where $\beta$ is a fixed number. Uniform centrality can be used as a simple prior to design advanced centrality measures such as Page Rank \cite{clauset2013network}.

\begin{lemma}
Uniform centrality satisfies Axioms \ref{ax:iso} and \ref{ax:loc}, but does not satisfy Axioms \ref{ax:min}, \ref{ax:mon}, \ref{ax:hop} and \ref{ax:struc}.
\end{lemma}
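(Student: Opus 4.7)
The plan is to split the six axioms into two groups: Axioms \ref{ax:iso} and \ref{ax:loc} hold trivially, while Axioms \ref{ax:min}--\ref{ax:struc} fail by direct counterexample. Each failure exploits a single structural fact, namely that $UC$ is constant across both nodes and graphs, so no modification of the graph (nor any change of node) can move its values.

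For Axioms \ref{ax:iso} and \ref{ax:loc}, I would observe that each axiom is an equation between two $UC$ values, one on $G$ and one on a related graph (either an isomorphic copy or the restriction to a connected component). Since $UC_u(G) = \beta$ independently of $u$ and $G$, both sides reduce to $\beta$; no structural analysis is required.

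For Axioms \ref{ax:min}, \ref{ax:mon}, and \ref{ax:hop}, uniform counterexamples suffice. Axiom \ref{ax:min} fails on any graph containing an isolated vertex $v$, since $UC_v(G) = \beta$ and the claim presumes $\beta > 0$ (otherwise $UC$ collapses to the zero measure and ceases to be interesting as a prior). Axiom \ref{ax:mon} fails because adding an edge leaves $UC_u$ fixed at $\beta$, so the required strict inequality $\beta > \beta$ is false, and this failure is independent of the value of $\beta$. Axiom \ref{ax:hop} fails for the same reason: in any graph of diameter at least two one can pick the new edge $\{u,v\}$ and hop distances $h < \bar h$ so that both $H_G^h(u,v)$ and $H_G^{\bar h}(u,v)$ are nonempty, and then each of the differences $|UC_{z_h}(G') - UC_{z_h}(G)|$ vanishes, so the inequality $0 > 0$ cannot hold.

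The main obstacle is Axiom \ref{ax:struc}. Its premise demands a bijection $h$ with $F_a(G) > F_{h(a)}(G)$, a strict inequality that can never hold for any pair of nodes under $UC$; on a purely first-order reading this would make $UC$ satisfy the axiom vacuously. I would follow the authors' evidently intended reading, in which a centrality that cannot realize the discriminating premise in any graph fails structural consistency, and back this up with an explicit witness such as a star graph with hub $u$ and leaf $v$, where $|N_G(u)| > |N_G(v)|$ and any discriminating centrality would force $F_u > F_v$, while $UC$ equates both values at $\beta$. Pinning down the intended reading of Axiom \ref{ax:struc} before citing the witness is the step that requires the most care.
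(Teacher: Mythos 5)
Your proposal is correct and rests on exactly the same observation as the paper's one-line proof, namely that $UC$ assigns the constant value $\beta$ to every node in every graph, so the two ``positive'' axioms reduce to $\beta=\beta$ and the remaining ones fail because the required strict inequalities cannot hold. The two caveats you flag --- that the failure of Axiom \ref{ax:min} presumes $\beta>0$, and that a strictly literal reading of Axiom \ref{ax:struc} would be vacuously satisfied since its premise $F_a(G)>F_{h(a)}(G)$ is never realizable for $UC$ --- are genuine subtleties the paper passes over silently, and your resolution (adopting the authors' evident intent that a measure unable to realize the discriminating premise fails the axiom) matches the paper's conclusion.
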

The proof is immediate as the uniform centrality assigns the same score to all the nodes in the network.

\subsection{Degree Centrality}
Degree centrality is defined as:
\begin{align}
	DC_u(G) = Degree_G(u)
	\label{eq:WDC}
\end{align}

When the graph $G$ is fixed, we may omit $G$ from the notation of centrality measure for the sake of brevity. So for a fixed $G$, $DC_u$ actually means $DC_u(G)$.  

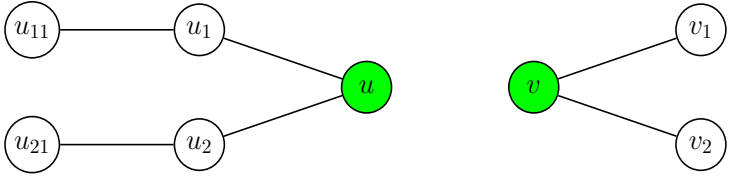
\begin{figure}
\centering
\resizebox{3.8in}{0.9in}{
\begin {tikzpicture}[auto, node distance =1 cm and 3cm, on grid, every loop/.style={},thick, state/.style={circle,draw,minimum size=3em,font=\sffamily\Large\bfseries}]\node[state, fill=green] (U) {$u$};

\node[state] (U1) [above left=of U] {$u_1$};
\node[state] (U2) [below left=of U] {$u_2$};
\node[state] (U11) [left=of U1] {$u_{11}$};
\node[state] (U21) [left=of U2] {$u_{21}$};

\node[state, fill=green] (V) [right=of U] {$v$};
\node[state] (V1) [above right=of V] {$v_1$};
\node[state] (V2) [below right=of V] {$v_2$};

\path (U) edge (U1);
\path (U) edge (U2);
\path (U1) edge (U11);
\path (U2) edge (U21);

\path (V) edge (V1);
\path (V) edge (V2);
\end{tikzpicture}
}
\caption{\text{Degree Centrality does not satisfy Axiom \ref{ax:struc}}. Here $DC_u =DC_v =2$, but $DC_{u_1} = DC_{u_2} = 2 > 1 = DC_{v_1} = DC_{v_2}$}
\label{fig:DC}
\end{figure}

\begin{lemma}
Degree Centrality satisfies Axioms \ref{ax:iso}, \ref{ax:loc}, \ref{ax:min} and \ref{ax:mon}.
\end{lemma}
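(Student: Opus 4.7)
The plan is to verify each of the four axioms in turn directly from the definition $DC_u(G) = degree_G(u) = |\{\{u,w\} \in E : w \in V\}|$. All four verifications are elementary; the value of the proof lies in spelling out the one-line observation that makes each axiom go through, and I would write them up in the order stated.

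First I would address Axiom \ref{ax:iso}. Given an isomorphism $f : V(G) \to V(H)$, I would note that $\{u,v\} \in E(G)$ iff $\{f(u), f(v)\} \in E(H)$, so the restriction of $f$ to $N_G(v)$ is a bijection onto $N_H(f(v))$. Since $degree_G(v) = |N_G(v)|$ and similarly for $H$, this gives $DC_v(G) = DC_{f(v)}(H)$.

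Next, for Axiom \ref{ax:loc}, the key observation is that every neighbor of $v$ is connected to $v$ by an edge and hence lies in the same connected component $K_v(G)$. Therefore $N_G(v) = N_{G[K_v(G)]}(v)$, and taking cardinalities gives $DC_v(G) = DC_v(G[K_v(G)])$. Axiom \ref{ax:min} is immediate: if $v$ is isolated then $N_G(v) = \emptyset$, so $DC_v(G) = 0$.

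Finally, for Axiom \ref{ax:mon}, adding the edge $\{u,v\} \notin E$ to form $G' = (V, E \cup \{u,v\})$ yields $N_{G'}(u) = N_G(u) \cup \{v\}$ and $N_{G'}(v) = N_G(v) \cup \{u\}$, where in each case the added element is genuinely new (since $\{u,v\} \notin E$). Hence $DC_u(G') = DC_u(G) + 1 > DC_u(G)$ and similarly for $v$, which is exactly the required strict inequality. There is no real obstacle here; the only point that requires a moment of care is observing in Axiom \ref{ax:loc} that no neighbor of $v$ can lie outside $K_v(G)$, since this is what ties degree to the local structural description demanded by the axiom.
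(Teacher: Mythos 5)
Your proof is correct and simply writes out the elementary verifications that the paper deems immediate and omits entirely ("We skip the proof as this is immediate"). The argument matches the intended direct approach: each axiom follows in one line from the definition $DC_u(G) = |N_G(u)|$.
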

We skip the proof as this is immediate.

\begin{lemma}
Degree centrality does not satisfy Axioms \ref{ax:hop} and \ref{ax:struc}.
\end{lemma}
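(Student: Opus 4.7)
The approach for both parts is by explicit counterexample, and both exploit the fact that degree centrality is a strictly local quantity: the degree of a node $z$ depends only on the edges incident on $z$ itself, not on the degrees or positions of its neighbors.

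For Axiom \ref{ax:struc}, the example already drawn in Figure \ref{fig:DC} can be reused. The plan is to verify that $|N_G(u)| = |N_G(v)| = 2$, so we may take $\bar{N}_G(u) = N_G(u)$ and define the obvious bijection $h(u_1) = v_1$, $h(u_2) = v_2$. A direct count gives $DC_{u_i} = 2$ (each $u_i$ is incident to $u$ and to $u_{i1}$) while $DC_{v_i} = 1$ (each $v_i$ is incident only to $v$). Thus the hypothesis $DC_a > DC_{h(a)}$ holds for every $a \in \bar{N}_G(u)$, yet $DC_u = DC_v = 2$, so the axiom's strict conclusion $DC_u > DC_v$ fails.

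For Axiom \ref{ax:hop}, the key observation to record is that adding the edge $\{u,v\}$ leaves the degree of every node $z \notin \{u,v\}$ unchanged, so $|DC_z(G') - DC_z(G)| = 0$ for all such $z$. The plan is then to exhibit any graph in which the hop-neighborhood sets $H_G^h(u,v)$ are non-empty for at least two distinct positive values of $h$, so that the axiom can even be applied. A minimal choice is $V = \{u,v,x,y\}$ with $E = \{\{u,x\},\{x,y\}\}$: here $x \in H_G^1(u,v)$ and $y \in H_G^2(u,v)$, and after adding $\{u,v\}$ the degrees of $x$ and $y$ remain $2$ and $1$ respectively. Hence both $|DC_x(G') - DC_x(G)|$ and $|DC_y(G') - DC_y(G)|$ are zero, and the required strict inequality $|DC_{z_h}(G')-DC_{z_h}(G)| > |DC_{z_{\bar h}}(G')-DC_{z_{\bar h}}(G)|$ collapses to $0 > 0$, which is false.

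Neither step poses a real obstacle; the only mild care needed is in the Diminishing Impact counterexample, where one must pick a graph in which the relevant $h$-hop neighborhoods of the pair $(u,v)$ are actually populated so that the axiom's universal statement has content to violate. Once such a graph is produced, the locality of degree centrality makes the violation immediate.
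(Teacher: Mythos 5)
Your proposal is correct and follows essentially the same route as the paper: for Axiom \ref{ax:hop} it uses the locality of degree (adding $\{u,v\}$ changes no degree outside $\{u,v\}$, so the required strict inequality fails), and for Axiom \ref{ax:struc} it uses exactly the counterexample of Figure \ref{fig:DC}. Your version is in fact slightly more careful than the paper's, since you exhibit an explicit graph in which the hop-neighborhoods $H_G^1(u,v)$ and $H_G^2(u,v)$ are nonempty so that Axiom \ref{ax:hop} genuinely applies, a point the paper leaves implicit.
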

\begin{proof}
Clearly adding a new edge $\{u,v\}$ only changes the centrality of the two immediate nodes $u$ and $v$, the degree centrality of all other nodes in the network remains the same. Hence degree centrality does not satisfy Axiom \ref{ax:hop}.

For the other part of the proof, we give a counter example in Figure \ref{fig:DC}. Here $DC_{u_1}>DC_{u_1}$ and $DC_{u_2}>DC_{v_2}$, but $DC_{u} = DC_{v}$. Hence degree centrality fails to satisfy Axiom \ref{ax:struc}.
\end{proof}

Degree centrality is a simple measure of centrality which only captures the local behavior of a node in the network. Naturally it fails to satisfy the last two axioms of our framework.

\subsection{Closeness centrality} 

Closeness centrality is defined as:
\begin{align}
	CC_u(G) = \sum\limits_{w \in V \setminus \{u\}} \frac{1}{dist(u,w)}
	\label{eq:WDC}
\end{align}

\begin{lemma}
Closeness centrality satisfies Axioms \ref{ax:iso}, \ref{ax:loc}, \ref{ax:min} and \ref{ax:mon}.
\end{lemma}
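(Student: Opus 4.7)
The plan is to verify each of the four axioms separately; each follows from elementary properties of shortest-path distances, so the proof will be short and divided into four paragraphs.

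For Axiom \ref{ax:iso}, I would observe that an isomorphism $f: V(G)\to V(H)$ preserves adjacency, and therefore preserves shortest-path lengths: for every $v,w \in V(G)$, $dist_G(v,w) = dist_H(f(v),f(w))$. Summing the reciprocal distances gives $CC_v(G) = CC_{f(v)}(H)$ immediately.

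For Axiom \ref{ax:loc}, the key point is that if $w$ lies outside the connected component $K_u(G)$ of $u$, then $dist_G(u,w)=\infty$, so $1/dist_G(u,w)=0$ and such terms contribute nothing to the sum. Thus
\begin{align*}
CC_u(G) = \sum_{w \in K_u(G)\setminus\{u\}} \frac{1}{dist_G(u,w)} = CC_u(G[K_u(G)]),
\end{align*}
since distances within the induced subgraph on $K_u(G)$ coincide with distances in $G$.

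Axiom \ref{ax:min} is immediate: if $u$ is isolated then no path from $u$ to any other vertex exists, so every $dist_G(u,w) = \infty$ and the sum defining $CC_u(G)$ is zero.

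The only step that requires a little care is Axiom \ref{ax:mon}. Let $G'=(V,E\cup\{u,v\})$ with $\{u,v\}\notin E$. For every $w\in V\setminus\{u\}$, any $u$-to-$w$ path in $G$ is still a $u$-to-$w$ path in $G'$, so $dist_{G'}(u,w)\le dist_G(u,w)$, and hence $1/dist_{G'}(u,w)\ge 1/dist_G(u,w)$. This already shows $CC_u(G') \ge CC_u(G)$. To promote this to a strict inequality, I would examine the single term $w=v$: in $G$, $dist_G(u,v)\ge 2$ (or $\infty$), giving a contribution of at most $1/2$, while in $G'$ the new edge forces $dist_{G'}(u,v)=1$, giving a contribution of exactly $1$. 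Hence the sum strictly increases, so $CC_u(G') > CC_u(G)$; by symmetry the same holds for $v$. The mild obstacle here is remembering to isolate the $w=v$ summand to upgrade monotonicity to strict monotonicity, since all other terms might in principle stay unchanged.
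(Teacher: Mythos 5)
Your proof is correct and follows exactly the route the paper has in mind: the paper omits the argument as straightforward, noting only that a new edge strictly increases the closeness of both endpoints because they are brought closer, which is precisely your $w=v$ term. Your write-up simply supplies the routine details (distance preservation under isomorphism, the $1/\infty = 0$ convention for locality and isolated minima, and isolating the $w=v$ summand for strictness), so there is nothing to add.
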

Again we omit the proof as it is straightforward. One can see that adding an edge would always strictly increase the closeness centrality of both the incident nodes as they are brought closer because of the new edge.

\begin{figure}
\centering
\resizebox{2.9in}{1.5in}{
\begin {tikzpicture}[auto, node distance =1 cm and 3cm, on grid, every loop/.style={},thick, state/.style={circle,draw,minimum size=2em,font=\sffamily\Large\bfseries}]

\node[state] (U) {$u$};
\node[state, fill=green] (Z1) [left=of U] {$z_1$};
\node[state] (V) [right=of U] {$v$};
\node[state] (W) [above=of U] {$w$};
\node[state] (X) [above=of V] {$x$};
\node[state] (Y) [below=of U] {$y$};
\node[state, fill=green] (Z2) [below=of Y] {$z_2$};

\path (U) edge (Z1);
\path (Z1) edge (W);
\path (W) edge (X);
\path (X) edge (V);
\path (U) edge (Y);
\path (Y) edge (Z2);

\path (U) edge (V) [dashed];

\end{tikzpicture}
}
\caption{Closeness Centrality does not satisfy Axiom \ref{ax:hop}. Change of centrality for $z_2$ is more that that for $z_1$ because of adding the new edge $\{u,v\}$.}
\label{fig:CC}
\end{figure}
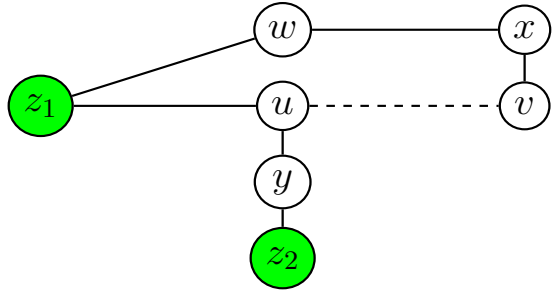

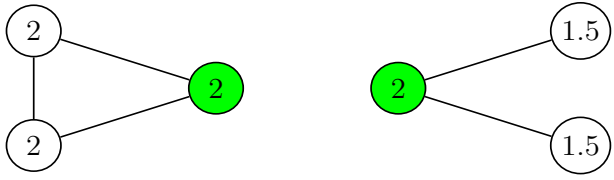
\begin{figure}
\centering
\resizebox{3.2in}{0.9in}{
\begin {tikzpicture}[auto, node distance =1 cm and 3cm, on grid, every loop/.style={},thick, state/.style={circle,draw,minimum size=3em,font=\sffamily\Large\bfseries}]

\node[state, fill=green] (U) {$2$};
\node[state] (U1) [above left=of U] {$2$};
\node[state] (U2) [below left=of U] {$2$};

\node[state, fill=green] (V) [right=of U] {$2$};
\node[state] (V1) [above right=of V] {$1.5$};
\node[state] (V2) [below right=of V] {$1.5$};

\path (U) edge (U1);
\path (U) edge (U2);
\path (U1) edge (U2);
\path (V) edge (V1);
\path (V) edge (V2);

\end{tikzpicture}
}
\caption{Closeness Centrality does not satisfy Axiom \ref{ax:struc}. Centrality values are labeled within the nodes in this graph. According to Axiom \ref{ax:struc}, the green node on the left component should have been strictly more central than that of the green node on the right component.}
\label{fig:BC1}
\end{figure}

\begin{lemma} \label{lemma:CC}
Closeness Centrality does not satisfy Axiom \ref{ax:hop}.
\end{lemma}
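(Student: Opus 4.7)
The plan is to exhibit the counter-example shown in Figure~\ref{fig:CC}. The graph $G$ has a long arm $u - z_1 - w - x - v$ of length four between $u$ and $v$, and a short pendant $u - y - z_2$ hanging off of $u$. The added edge $\{u,v\}$ drastically shortens the long arm. The two distinguished nodes $z_1$ and $z_2$ will play the roles of $z_h$ (with $h=1$) and $z_{\bar h}$ (with $\bar h=2$) in Axiom~\ref{ax:hop}; to refute the axiom I only need to exhibit that the change in closeness is \emph{larger} at hop $2$ than at hop $1$.

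First I would verify the hop classification: $\mathrm{dist}_G(u,z_1)=1$ and $\mathrm{dist}_G(v,z_1)=3$, so $z_1\in H_G^{1}(u,v)$; and $\mathrm{dist}_G(u,z_2)=2$ with $\mathrm{dist}_G(v,z_2)=6$, so $z_2\in H_G^{2}(u,v)$. Then I would tabulate, for each of $z_1$ and $z_2$, the shortest-path distances to every other vertex in both $G$ and $G' = (V, E\cup\{u,v\})$. The essential qualitative point is that adding the chord $\{u,v\}$ barely disturbs $z_1$'s view of the graph (only $\mathrm{dist}(z_1,v)$ drops, from $3$ to $2$), whereas for $z_2$ a whole string of distances to the $v$-branch collapses: $\mathrm{dist}(z_2,v)$ falls from $6$ to $3$, and $\mathrm{dist}(z_2,x)$ from $5$ to $4$.

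Next I would plug these distances into the definition of $CC$ and compute the four values explicitly. A short calculation yields $CC_{z_1}(G)=\tfrac{11}{3}$ and $CC_{z_1}(G')=\tfrac{23}{6}$, so the change at hop $1$ is $\tfrac{1}{6}=\tfrac{10}{60}$; and $CC_{z_2}(G)=\tfrac{49}{20}$, $CC_{z_2}(G')=\tfrac{8}{3}$, so the change at hop $2$ is $\tfrac{13}{60}$. Since $\tfrac{13}{60} > \tfrac{10}{60}$, we have $|CC_{z_2}(G')-CC_{z_2}(G)| > |CC_{z_1}(G')-CC_{z_1}(G)|$, directly contradicting Axiom~\ref{ax:hop}.

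There is no real obstacle here beyond careful bookkeeping; the only design insight needed was to make the two arms asymmetric, so that nodes on the short arm sit close to $u$ but remain far from $v$ in $G$. That asymmetry is exactly what causes the new chord to shorten many of the $z_2$-to-$v$-side distances while leaving the $z_1$ distances almost unchanged, and it is robust enough that the same construction can be scaled up (lengthening the $u - y - z_2$ branch) to amplify the violation if a sharper counter-example were ever desired.
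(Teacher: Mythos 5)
Your proposal is correct and uses essentially the same counterexample and argument as the paper: the graph of Figure~\ref{fig:CC} with $z_1\in H_G^{1}(u,v)$ and $z_2\in H_G^{2}(u,v)$, where the change in closeness is $1/6$ at hop $1$ but $1/20+1/6=13/60$ at hop $2$, violating Axiom~\ref{ax:hop}. Your explicit tabulation of all distances and the final values $11/3,\,23/6,\,49/20,\,8/3$ is just a more detailed bookkeeping of the same computation the paper performs via the changed terms only.
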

\begin{proof}
We prove this lemma by constructing a scenario where CC fails to satisfy Axiom \ref{ax:hop}.
Let us assume the graph $G$ is without the edge $(u,v)$ and the graph $G'$ is obtained by adding the edge $(u,v)$ to the graph $G$, as shown in Figure \ref{fig:CC}. Now it can be easily seen that,
$CC_{z_1}(G') - CC_{z_1}(G) = 1/2 - 1/3 = 1/6$ and $CC_{z_2}(G') - CC_{z_2}(G) = (1/4 -1/5) + (1/3 - 1/6) = 1/20 + 1/6 > 1/6$.

But $z_1 \in H_G^1(u,v)$  and $z_2 \in H_G^2(u,v)$. Hence proved.
\end{proof}

\begin{lemma} \label{lemma:CC}
Closeness Centrality does not satisfy Axiom \ref{ax:struc}.
\end{lemma}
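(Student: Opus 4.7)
The plan is to exhibit the counter-example depicted in Figure \ref{fig:BC1} and verify by direct computation that the hypothesis of Axiom \ref{ax:struc} holds while its conclusion fails. Concretely, I would take the graph $G$ consisting of two disjoint connected components: a triangle on three vertices (call its vertices $u, a_1, a_2$) and a star on three vertices with center $v$ and leaves $b_1, b_2$. Since Axiom \ref{ax:loc} (already verified for closeness in a preceding lemma) ensures that centrality on a disconnected graph equals centrality on each component separately, the pairwise distances between the two components play no role (equivalently, $1/dist(\cdot,\cdot) = 0$ for pairs in different components).

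Next I would compute the closeness scores. In the triangle every pair is at distance $1$, so $CC_u(G) = CC_{a_1}(G) = CC_{a_2}(G) = 1 + 1 = 2$. In the star the center is at distance $1$ from each leaf, giving $CC_v(G) = 1 + 1 = 2$, while each leaf is at distance $1$ from $v$ and distance $2$ from the other leaf, giving $CC_{b_1}(G) = CC_{b_2}(G) = 1 + \tfrac{1}{2} = \tfrac{3}{2}$.

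Then I would check the hypotheses of Axiom \ref{ax:struc} with this choice of $u$ and $v$. Both vertices have degree $2$, so $|N_G(u)| = |N_G(v)|$ and I may take $\bar{N}_G(u) = N_G(u) = \{a_1, a_2\}$ with any bijection $h : \{a_1, a_2\} \to \{b_1, b_2\}$. For each $a \in \bar{N}_G(u)$ we have $F_a(G) = 2 > \tfrac{3}{2} = F_{h(a)}(G)$, so the neighbor-dominance premise is satisfied. The axiom would then demand $F_u(G) > F_v(G)$, but in fact $F_u(G) = F_v(G) = 2$, contradicting the required strict inequality.

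There is no real obstacle here beyond verifying the arithmetic, since a suitable counter-example is already displayed in Figure \ref{fig:BC1}; the only subtlety worth flagging is the use of Axiom \ref{ax:loc} to justify that the two components may be analyzed independently (so that inter-component infinite distances contribute zero to the closeness sum). If one preferred a connected counter-example, the same idea could be adapted by linking the triangle and the star through a long path attached at $a_2$ and $b_2$, chosen long enough that the additional $1/dist$ contributions remain symmetric-enough to preserve $CC_u = CC_v$ while the neighbor ordering is unaffected; however, the two-component version suffices and is cleaner.
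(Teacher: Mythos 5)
Your proposal is correct and is essentially the paper's own proof: it uses the same counterexample as Figure \ref{fig:BC1} (a triangle component versus a three-node star), verifies the same closeness values ($2$ for the two green nodes, $2$ versus $\tfrac{3}{2}$ for their neighbors), and concludes that the bijection hypothesis of Axiom \ref{ax:struc} holds while the strict inequality fails. The extra remarks about inter-component distances contributing zero and a possible connected variant are fine but not needed.
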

\begin{proof}
Consider Figure \ref{fig:BC1}, the closeness centrality values (labeled inside the nodes) of the two green colored nodes are the same, but the closeness centrality of the neighbors of the node on the left hand side are strictly greater than those on the right hand side. Thus there exists a bijection as stated in Axiom \ref{ax:struc}. Hence this is a counterexample which shows that closeness centrality does not satisfy Axiom \ref{ax:struc}.
\end{proof}

\subsection{Betweenness Centrality}

Betweenness centrality is defined as:
\begin{align}
	BC_u(G) = \sum\limits_{\substack{s,t \in K_u(G) \\ s \neq t \neq u}} \frac{|p \in \Pi_s(s,t) : u \in p|}{|\Pi_s(s,t)|}
	\label{eq:WDC}
\end{align}

\begin{lemma}
Betweenness centrality satisfies Axiom \ref{ax:iso}, \ref{ax:loc} and \ref{ax:min}.
\end{lemma}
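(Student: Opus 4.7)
The plan is to verify each of the three axioms in turn; all three follow directly from the definition of $BC_u(G)$, so there is no deep obstacle, only careful bookkeeping about shortest paths. I would write the argument as three short paragraphs, one per axiom, and expect none of them to require more than a few lines.

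For Axiom \ref{ax:iso}, I would fix an isomorphism $f : V(G) \to V(H)$ and show it induces a bijection between shortest paths in $G$ and shortest paths in $H$. Because $f$ preserves adjacency, it maps any path of length $k$ in $G$ to a path of length $k$ in $H$, and the same holds for $f^{-1}$; hence $dist(s,t)$ in $G$ equals $dist(f(s), f(t))$ in $H$ for every $s,t$. Consequently $f$ restricts to a bijection from $\Pi_s(s,t)$ onto $\Pi_s(f(s), f(t))$, and a path in $G$ contains $v$ iff its image contains $f(v)$. Since $f$ also sends $K_v(G)$ bijectively onto $K_{f(v)}(H)$, summing over all admissible pairs $(s,t)$ yields $BC_v(G) = BC_{f(v)}(H)$.

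For Axiom \ref{ax:loc}, the key observation is that the outer sum in the definition of $BC_u$ is already restricted to $s, t \in K_u(G)$. I would then note that every shortest path between two nodes of $K_u(G)$ lies entirely within $G[K_u(G)]$, because a path cannot traverse an edge leaving a connected component. Therefore, for each pair $(s,t)$ with $s,t \in K_u(G)$, both the set $\Pi_s(s,t)$ and the count of paths passing through $u$ are identical when computed in $G$ or in $G[K_u(G)]$. This gives $BC_u(G) = BC_u(G[K_u(G)])$ termwise.

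For Axiom \ref{ax:min}, if $v$ is isolated then $K_v(G) = \{v\}$, so there are no pairs $(s,t)$ with $s, t \in K_v(G)$ and $s \neq t \neq v$. The index set of the summation defining $BC_v(G)$ is empty, and as an empty sum it evaluates to $0$. The only point requiring a single explicit sentence is this empty-sum convention; the isomorphism step is marginally the most verbose but remains elementary, and I anticipate the entire proof occupying under half a page.
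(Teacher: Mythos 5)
Your proof is correct and is exactly the routine verification the paper has in mind; the paper itself skips the proof as trivial, and your three-paragraph argument (isomorphisms preserve distances, shortest paths, and components; shortest paths between nodes of $K_u(G)$ stay inside $G[K_u(G)]$; an isolated node yields an empty sum) fills it in faithfully. No gaps.
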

We skip the proof as it is trivial.

\begin{figure}
\centering
\resizebox{2.4in}{0.9in}{
\begin {tikzpicture}[auto, node distance =1 cm and 3cm, on grid, every loop/.style={},thick, state/.style={circle,draw,minimum size=3em,font=\sffamily\Large\bfseries}]

\node[state] (W) {$w$};
\node[state, fill=green] (U) [above left=of W] {$u$};
\node[state, fill=green] (V) [below left=of W] {$v$};
\node[state] (X) [right=of W] {$x$};

\path (W) edge (U);
\path (W) edge (V);
\path (W) edge (X);
\path (W) edge (X);
\path (U) edge (V)[dashed];

\end{tikzpicture}
}
\caption{Betweenness Centrality does not satisfy Axiom \ref{ax:mon}. Centrality of both the nodes $u$ and $v$ remain same even after adding the new edge $\{u,v\}$.}
\label{fig:BCmon}
\end{figure}
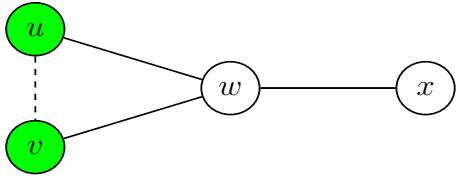

\begin{lemma}
Betweenness Centrality does not satisfy Axiom \ref{ax:mon}.
\end{lemma}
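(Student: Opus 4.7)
The plan is to produce an explicit counterexample, namely the graph depicted in Figure \ref{fig:BCmon}. First I would compute $BC_u(G)$ and $BC_v(G)$ in the original graph $G$, which is the star $K_{1,3}$ with center $w$ and leaves $u$, $v$, $x$. In such a star all shortest paths between two leaves pass through the center $w$, and any shortest path involving $w$ is just a single edge, so neither $u$ nor $v$ appears as an intermediate node on any shortest path. Hence $BC_u(G) = BC_v(G) = 0$.

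Next I would compute $BC_u(G')$ and $BC_v(G')$ in $G' = (V, E \cup \{u,v\})$. By symmetry it suffices to argue for $u$. Since $u$ is adjacent only to $w$ and $v$ in $G'$, any shortest path with $u$ as an internal vertex must be of the form $w$--$u$--$v$ (or its reverse). But in $G'$ the pair $(w,v)$ is joined by a direct edge, so the unique shortest path between them has length $1$ and does not traverse $u$. Consequently $u$ contributes $0$ to the betweenness sum for every source-target pair, giving $BC_u(G') = 0$, and symmetrically $BC_v(G') = 0$.

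Putting these together, $BC_u(G') = BC_u(G)$ and $BC_v(G') = BC_v(G)$, so the strict inequality required by Axiom \ref{ax:mon} fails. I expect the only mildly delicate step to be the case analysis that shows $u$ (and $v$) do not lie on any shortest path in $G'$; this is routine because of the small diameter and small vertex count of the example, but it is worth stating explicitly that the newly added edge $\{u,v\}$ creates triangles that ``short-circuit'' any would-be $2$-hop path through $u$ or $v$, which is precisely why betweenness remains zero.
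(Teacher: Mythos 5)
Your proposal is correct and matches the paper's argument: it uses exactly the counterexample of Figure \ref{fig:BCmon}, where $BC_u$ and $BC_v$ remain $0$ before and after adding the edge $\{u,v\}$. You simply spell out the verification that the paper leaves as ``one can check easily,'' which is a welcome addition but not a different approach.
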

\begin{proof}
We give a counterexample as shown in Figure \ref{fig:BCmon}. Here we consider the graph $G$ without the edge $\{u,v\}$ and the graph $G'$ is obtained after adding the edge $\{u,v\}$. As one can check easily, $BC_u(G)=BC_u(G')$ and $BC_v(G)=BC_v(G')$. Thus it violets Axiom \ref{ax:mon}.
\end{proof}

\begin{lemma}
Betweenness Centrality does not satisfy Axiom \ref{ax:hop}.
\end{lemma}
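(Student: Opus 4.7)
The plan is to exhibit a small explicit counterexample. I would take $G$ to be the path $u - a - b - c - v$ on five vertices and form $G'$ by adding the chord $\{u,v\}$. Checking hop distances from $(u,v)$ in $G$ gives $a, c \in H_G^1(u,v)$ and $b \in H_G^2(u,v)$, so I intend to use $z_h = a$ and $z_{\bar h} = b$ with $h = 1 < 2 = \bar h$ as the witnesses to the failure of Axiom \ref{ax:hop}.

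The main step is to compute the four betweenness values $BC_a(G), BC_b(G), BC_a(G'), BC_b(G')$ by direct enumeration from the definition. In the path $G$ all shortest paths are unique: $a$ is intermediate only for pairs with one endpoint $u$ and the other in $\{b, c, v\}$, while $b$ is intermediate for pairs with one endpoint in $\{u, a\}$ and the other in $\{c, v\}$, giving $BC_a(G) = 3$ and $BC_b(G) = 4$. In $G'$ the chord reroutes almost every cross-side shortest path through the new edge, so most pairs stop contributing to the betweenness of the interior nodes. A quick enumeration in $G'$ (being careful to note that $d_{G'}(a, c) = 2$ via $a - b - c$ rather than $3$ via the chord) should yield $BC_a(G') = BC_b(G') = 1$, because the only surviving interior crossings are $(u,b)$ through $a$ and $(a,c)$ through $b$. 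The absolute changes are then $2$ for $a$ and $3$ for $b$; since $2 < 3$ while $h < \bar h$, Axiom \ref{ax:hop} fails.

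The main obstacle is purely bookkeeping: I need to enumerate all pairs in $G'$, determine which shortest path each uses, and in particular verify that no pair accidentally acquires multiple shortest paths of equal length whose fractional $\sigma_{st}(x)/\sigma_{st}$ contributions would alter the tallies. Since $G'$ has diameter $2$ and only ten unordered pairs, this enumeration is short and unambiguous, and the remaining arithmetic is immediate.
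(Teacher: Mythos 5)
Your proposal is correct and follows essentially the same strategy as the paper: exhibit a concrete graph plus an added edge where the betweenness change at a hop-$2$ node ($3$ for $b$) exceeds the change at a hop-$1$ node ($2$ for $a$), violating Axiom \ref{ax:hop}; your counts $BC_a(G)=3$, $BC_b(G)=4$, $BC_a(G')=BC_b(G')=1$ check out, and the odd cycle indeed has unique shortest paths. The only difference is the choice of example — the paper reuses the $7$-node graph of Figure \ref{fig:CC} (with changes $0$ at hop $1$ versus $7$ at hop $2$), whereas your $5$-vertex path with a chord is a slightly smaller witness of the same phenomenon.
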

\begin{proof}
Again we give a counterexample in this case.
Consider graphs $G$ and $G'$ as stated in the proof of Lemma \ref{lemma:CC}, and shown in Figure \ref{fig:CC}. Now consider nodes $w$ and $y$. Clearly, $BC_{w}(G) - BC_{w}(G') = 8 - 1 = 7$ and $BC_{y}(G) - BC_{y}(G') = 5 - 5 = 0 < 7$.

But $w \in H_G^2(u,v)$  and $y \in H_G^1(u,v)$. Hence proved.
\end{proof}

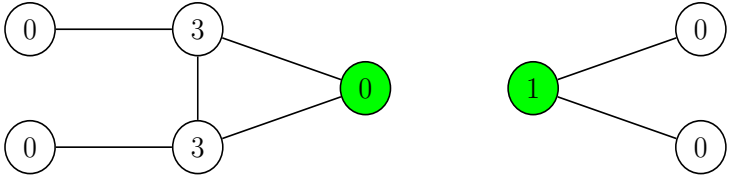
\begin{figure}
\centering
\resizebox{3.8in}{0.9in}{
\begin {tikzpicture}[auto, node distance =1 cm and 3cm, on grid, every loop/.style={},thick, state/.style={circle,draw,minimum size=3em,font=\sffamily\Large\bfseries}]

\node[state, fill=green] (U) {$0$};
\node[state] (U1) [above left=of U] {$3$};
\node[state] (U2) [below left=of U] {$3$};
\node[state] (U11) [left=of U1] {$0$};
\node[state] (U21) [left=of U2] {$0$};

\node[state, fill=green] (V) [right=of U] {$1$};
\node[state] (V1) [above right=of V] {$0$};
\node[state] (V2) [below right=of V] {$0$};

\path (U) edge (U1);
\path (U) edge (U2);
\path (U1) edge (U11);
\path (U2) edge (U21);
\path (U1) edge (U2);

\path (V) edge (V1);
\path (V) edge (V2);

\end{tikzpicture}
}
\caption{Betweenness Centrality does not satisfy Axiom \ref{ax:struc}. Again the centrality values of the nodes are labeled with the nodes. According to Axiom \ref{ax:struc}, the green node on the left component should have been strictly more central than that of the green node on the right component.}
\label{fig:BC}
\end{figure}

\begin{lemma}
Betweenness Centrality does not satisfy Axiom \ref{ax:struc}.
\end{lemma}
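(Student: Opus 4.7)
The plan is to establish the negative result by exhibiting a single counterexample, namely the graph shown in Figure \ref{fig:BC}. I will designate the two green vertices as $u$ (in the left connected component) and $v$ (in the right connected component). By Axiom \ref{ax:loc}, which betweenness is already known to satisfy, I can analyze each component independently. The left component consists of a triangle on $\{u, u_1, u_2\}$ together with pendant vertices $u_{11}$ attached to $u_1$ and $u_{21}$ attached to $u_2$; the right component is simply the path $v_1 - v - v_2$.

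Next I would compute the relevant betweenness values directly from the definition. For $u$, the crucial observation is that in the left component, every pair of vertices $(s,t)$ with $s,t \neq u$ has a shortest path that avoids $u$: any path that would route through $u$ can instead use the direct edge $\{u_1, u_2\}$ of the triangle as a shortcut. Thus $BC_u(G) = 0$. For $u_1$, the pairs $(u_{11}, u)$, $(u_{11}, u_2)$, and $(u_{11}, u_{21})$ all have their unique shortest paths passing through $u_1$, yielding $BC_{u_1}(G) = 3$; by symmetry $BC_{u_2}(G) = 3$. On the right side, the only pair not involving $v$ is $(v_1, v_2)$, whose unique shortest path passes through $v$, so $BC_v(G) = 1$, while $BC_{v_1}(G) = BC_{v_2}(G) = 0$.

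Finally I would verify that the hypotheses of Axiom \ref{ax:struc} are met and derive the contradiction. Here $|N_G(u)| = |N_G(v)| = 2$, so I may take $\bar{N}_G(u) = N_G(u) = \{u_1, u_2\}$ and define the bijection $h$ by $h(u_1) = v_1$ and $h(u_2) = v_2$. For every $a \in \bar{N}_G(u)$, we have $BC_a(G) = 3 > 0 = BC_{h(a)}(G)$. The axiom therefore demands $BC_u(G) > BC_v(G)$, but in fact $BC_u(G) = 0 < 1 = BC_v(G)$, which is the desired contradiction.

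The only mildly nontrivial step is recognizing why $u$ has zero betweenness despite sitting in a moderately dense neighborhood: the triangle edge $\{u_1, u_2\}$ acts as a shortcut, diverting all shortest paths away from $u$. The rest of the argument amounts to routine bookkeeping on a seven-vertex graph, so the main design choice is ensuring that the left component is constructed so that $u$'s neighbors are individually very central (via their pendant attachments) while $u$ itself is bypassed, and that the right component is sparse enough that $v$ still has positive betweenness yet its neighbors have none.
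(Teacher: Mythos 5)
Your proof is correct and follows the paper's own argument exactly: it uses the same counterexample graph of Figure \ref{fig:BC} (triangle with two pendants versus a three-node path), the same choice of green vertices and bijection, and the same contradiction $BC_u(G)=0<1=BC_v(G)$. The only difference is that you spell out the betweenness computations that the paper simply records as node labels in the figure, and those computations are accurate.
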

\begin{proof}
Again we construct a counterexample as depicted in Figure \ref{fig:BC}. Betweenness centrality of each of the nodes is marked on the node. If we compare the nodes with green color, they have same number of neighbors and the betweenness centrality of each of the neighbors of the left green colored node is more than the corresponding neighbor of the right green colored node. But the betweenness centrality of the right green colored node is higher than that of the left green colored node. Hence proved.
\end{proof}

\subsection{Weighted Degree Centrality}
Let us define another Centrality measure, namely Weighted Degree Centrality (WDC) as:
\begin{align}
	WDC_u(G) = \sum\limits_{w \neq u; w \in V} \frac{Deg(w)}{dist(u,w)}
	\label{eq:WDC}
\end{align}
Clearly it is a generalization of the degree centrality. WDC of a node depends upon all other nodes in the network. As the distance of a node increases from the node under consideration, its contribution also decreases. Hence, WDC of a node captures the global effect of all other nodes on the node under consideration and hence it is expected to perform better than degree centrality which captures just the degree of the node.

\begin{lemma}
Weighted degree centrality satisfies Axioms \ref{ax:iso}, \ref{ax:loc}, \ref{ax:min} and \ref{ax:mon}.
\end{lemma}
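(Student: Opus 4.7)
The plan is to handle the four axioms in order, relying on the fact that $WDC_u(G) = \sum_{w \neq u} \mathrm{Deg}_G(w)/\mathrm{dist}_G(u,w)$ is built from two quantities, namely vertex degrees and shortest-path distances, both of which behave well under the operations appearing in the axioms. For Axiom~\ref{ax:iso}, I would observe that if $f : V(G) \to V(H)$ is a structure-preserving bijection, then $\mathrm{Deg}_G(w) = \mathrm{Deg}_H(f(w))$ and $\mathrm{dist}_G(u,w) = \mathrm{dist}_H(f(u), f(w))$ for every pair of vertices. Substituting $w' = f(w)$ into the sum defining $WDC_{f(v)}(H)$ gives a term-by-term match with $WDC_v(G)$.

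For Axiom~\ref{ax:loc}, I would split the sum according to whether $w$ lies inside $K_u(G)$ or not. Vertices $w$ outside $K_u(G)$ satisfy $\mathrm{dist}_G(u,w) = \infty$, so they contribute $0$. For $w \in K_u(G)\setminus\{u\}$, both the degree of $w$ and the distance $\mathrm{dist}(u,w)$ are computed entirely within the induced subgraph $G[K_u(G)]$, because every edge incident to $w$ and every shortest $u$-to-$w$ path is contained in that component. Hence the nonzero terms in $WDC_u(G)$ coincide with the terms of $WDC_u(G[K_u(G)])$. Axiom~\ref{ax:min} is then immediate: if $u$ is isolated, $\mathrm{dist}_G(u,w) = \infty$ for all $w \neq u$, so every summand vanishes.

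The main step, and the only place requiring a small argument, is Axiom~\ref{ax:mon}. Let $G' = (V, E \cup \{\{u,v\}\})$ with $\{u,v\} \notin E$. I would prove $WDC_u(G') > WDC_u(G)$ (the case of $v$ is symmetric) by inspecting the summands $\mathrm{Deg}(w)/\mathrm{dist}(u,w)$ one term at a time. For every $w \neq u,v$ the degree is unchanged and adding an edge can only decrease or preserve shortest-path distances, so each such term is non-decreasing. For the $w=v$ term, both quantities move favorably at once: $\mathrm{dist}_{G'}(u,v) = 1$ (whereas $\mathrm{dist}_G(u,v) \geq 2$, possibly $\infty$), and $\mathrm{Deg}_{G'}(v) = \mathrm{Deg}_G(v) + 1$. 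Hence
\begin{equation*}
\frac{\mathrm{Deg}_{G'}(v)}{\mathrm{dist}_{G'}(u,v)} \;=\; \mathrm{Deg}_G(v) + 1 \;>\; \frac{\mathrm{Deg}_G(v)}{\mathrm{dist}_G(u,v)},
\end{equation*}
giving a strict gain in that one term while no other term decreases. Summing yields strict monotonicity.

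I do not foresee a serious obstacle; the only care needed is the handling of the potentially infinite distance and the separation of the $v$-term from the rest of the sum in Axiom~\ref{ax:mon}. All other axioms follow essentially by unpacking the definition together with standard facts about degrees, components and shortest-path distances.
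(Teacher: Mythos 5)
Your proof is correct, and since the paper simply omits this proof as ``immediate,'' your argument supplies exactly the routine verification the authors had in mind: degrees and distances are isomorphism-invariant (Axiom~\ref{ax:iso}), out-of-component terms vanish via infinite distance (Axioms~\ref{ax:loc} and~\ref{ax:min}), and for Axiom~\ref{ax:mon} the $w=v$ term strictly increases (from at most $\mathrm{Deg}_G(v)/2$, or $0$ if $v$ was isolated, to $\mathrm{Deg}_G(v)+1$) while no other term decreases. No gaps; the handling of infinite distances and the isolation of the $v$-term are precisely the points that need care, and you address both.
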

We skip the proof as it is immediate.

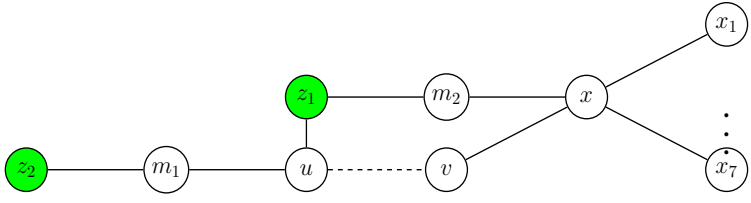
\begin{figure}
\centering
\resizebox{3.9in}{1in}{
\begin {tikzpicture}[auto, node distance =1.5 cm and 3cm, on grid, every loop/.style={},thick, state/.style={circle,draw,minimum size=3em,font=\sffamily\Large\bfseries}]

\node[state] (U) {$u$};
\node[state, fill=green] (Z1) [above=of U] {$z_1$};
\node[state] (V) [right=of U] {$v$};
\node[state] (M1) [left=of U] {$m_1$};
\node[state] (M2) [right=of Z1] {$m_2$};
\node[state, fill=green] (Z2) [left=of M1] {$z_2$};
\node[state] (X) [right=of M2] {$x$};
\node[state] (X1) [above right=of X] {$x_1$};
\node[state] (X7) [below right=of X] {$x_7$};

\path (U) edge (Z1);
\path (Z1) edge (M2);
\path (M2) edge (X);
\path (V) edge (X);
\path (U) edge (M1);
\path (M1) edge (Z2);
\path (X) edge (X1);
\path (X) edge (X7);

\path (U) edge (V) [dashed];

\path (X7) -- (X1) node [black, font=\Huge,  midway, sloped] {$\dots$};

\end{tikzpicture}
}
\caption{WDC does not satisfy Axiom \ref{ax:hop}. Change of centrality for $z_2$ is more that that for $z_1$ because of adding the new edge $\{u,v\}$.}
\label{fig:WDCHop}
\end{figure}

\begin{lemma}
Weighted Degree Centrality does not satisfy axiom \ref{ax:hop}.
\end{lemma}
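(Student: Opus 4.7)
My plan is to exhibit a concrete counterexample, following the same template used above for closeness and betweenness centrality: produce a graph $G$, a non-edge $\{u,v\}$, and two witnesses $z_1 \in H_G^1(u,v)$ and $z_2 \in H_G^2(u,v)$ for which adding $\{u,v\}$ changes $\mathrm{WDC}_{z_2}$ by strictly more than it changes $\mathrm{WDC}_{z_1}$. The graph in Figure~\ref{fig:WDCHop} is designed exactly for this: $z_1$ is a neighbour of $u$, while $z_2$ sits two hops from $u$ through the intermediary $m_1$, and a high-degree hub $x$ with a cluster of pendants $x_1,\ldots,x_7$ is attached to $v$. The hub is the crux of the construction, because in the formula $\mathrm{WDC}_z(G)=\sum_{w\neq z}\deg(w)/\mathrm{dist}(z,w)$ each term is weighted by the degree of the distant node, so an $x$ of large degree inflates any change in $\mathrm{dist}(z,x)$.

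The concrete execution is routine bookkeeping. I would first tabulate, for $z\in\{z_1,z_2\}$ and each other node $w$, the pairs $(\deg(w),\mathrm{dist}(z,w))$ in the original graph $G$ and in the modified graph $G'=(V,E\cup\{u,v\})$. Two things change when the edge is inserted: $\deg(u)$ and $\deg(v)$ each go up by one, and some shortest-path distances shrink. For $z_1$ only $\mathrm{dist}(z_1,v)$ drops (from $3$ to $2$); for $z_2$ the distances to $v$, to $x$, and to every one of $x_1,\ldots,x_7$ all strictly decrease. I would then substitute these into the WDC definition and sum, obtaining closed-form expressions for $|\mathrm{WDC}_{z_1}(G')-\mathrm{WDC}_{z_1}(G)|$ and $|\mathrm{WDC}_{z_2}(G')-\mathrm{WDC}_{z_2}(G)|$, and verify that the second is strictly larger.

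The main obstacle is conceptual rather than technical: a naive counterexample would fail because, all other things being equal, the $1/\mathrm{dist}$ coefficient does diminish with hop-distance, so the closer node $z_1$ would ordinarily win. The key design decision is therefore the hub $x$ with its seven pendants sitting near $v$: for $z_2$ the shortcut $\{u,v\}$ shortens the distance to eight different nodes on the $v$-side, each amplified (collectively by the summation over the seven pendants, and individually by the large degree of $x$), whereas for $z_1$ only a single term of modest weight changes. Once the hub is in place the proof reduces to comparing two concrete rationals, which is a mechanical check.
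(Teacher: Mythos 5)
Your proposal is correct and takes essentially the same route as the paper: it uses the very counterexample of Figure~\ref{fig:WDCHop}, with $z_1 \in H_G^1(u,v)$ and $z_2 \in H_G^2(u,v)$, and correctly identifies exactly which terms of $\mathrm{WDC}$ change (for $z_1$ only the $u$- and $v$-terms; for $z_2$ the $u$-, $v$-, $x$- and pendant terms). The arithmetic you defer does go through, but only barely ($\mathrm{WDC}_{z_2}(G')-\mathrm{WDC}_{z_2}(G) = 1+\tfrac{41}{60}$ versus $\mathrm{WDC}_{z_1}(G')-\mathrm{WDC}_{z_1}(G) = \tfrac{5}{3}$, a margin of $\tfrac{1}{60}$), so that final mechanical check is genuinely needed to close the argument, exactly as in the paper's proof.
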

\begin{proof}
We give a counterexample where WDC fails to satisfy Axiom \ref{ax:hop}. In Figure \ref{fig:WDCHop}, suppose $G$ is the graph without the edge $\{u,v\}$ and $G'$ is the same graph with the added edge $\{u,v\}$.

As $z_1 \in H_G^1(u,v)$ and $z_2 \in H_G^2(u,v)$, according to Axiom \ref{ax:hop} $WDC_{z_1}(G') - WDC_{z_1}(G)$ $>$ $WDC_{z_2}(G') - WDC_{z_2}(G)$. Now, $WDC_{z_1}(G') - WDC_{z_1}(G)$ $=$ $(3/1 + 2/2) - (2/1 - 1/3)$ $=$ $5/3$.

But, $WDC_{z_2}(G') - WDC_{z_2}(G)$ $=$ $(3/2 + 2/3 + (7+2)/4 + 7 \times 1/5) - (2/2 + 1/6 + (7+2)/5 + 7 \times 1/6)$ $=$ $1 + 41/60$ $>$ $WDC_{z_1}(G') - WDC_{z_1}(G)$.
\end{proof}

\begin{figure}
\centering
\resizebox{4in}{1.2in}{
\begin {tikzpicture}[auto, node distance =1 cm and 3cm, on grid, every loop/.style={},thick, state/.style={circle,draw,minimum size=3em,font=\sffamily\Large\bfseries}]

\node[state, fill=green] (U) {$u$};
\node[state] (U1) [above left=of U] {$u_1$};
\node[state] (U2) [below left=of U] {$u_2$};
\node[state] (U3) [below left=of U1] {$u_3$};
\node[state] (U4) [above left=of U3] {$u_4$};
\node[state] (U5) [below left=of U3] {$u_5$};

\node[state, fill=green] (V) [right=of U] {$v$};
\node[state] (V1) [above right=of V] {$v_1$};
\node[state] (V2) [below right=of V] {$v_2$};
\node[state] (V3) [above right=of V1] {$v_3$};
\node[state] (V4) [right=of V1] {$v_4$};
\node[state] (V5) [right=of V2] {$v_5$};
\node[state] (V6) [below right=of V2] {$v_6$};

\path (U) edge (U1);
\path (U) edge (U2);
\path (U1) edge (U3);
\path (U2) edge (U3);
\path (U3) edge (U4);
\path (U3) edge (U5);
\path (V) edge (V1);
\path (V) edge (V2);
\path (V1) edge (V3);
\path (V1) edge (V4);
\path (V2) edge (V5);
\path (V2) edge (V6);

\end{tikzpicture}
}
\caption{WDC does not satisfy Axiom \ref{ax:struc}. Here, $WDC_{u_1} > WDC_{v_1}$ and $WDC_{u_2} > WDC_{v_2}$, but $WDC_{v} > WDC_u$.} 
\label{fig:WDCStruc}
\end{figure}
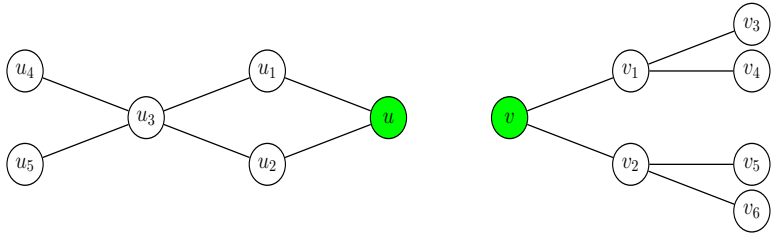

\begin{lemma}
Weighted degree centrality does not satisfy Axiom \ref{ax:struc}.
\end{lemma}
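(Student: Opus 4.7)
The plan is to exhibit an explicit counterexample, namely the two-component graph already displayed in Figure~\ref{fig:WDCStruc}, and verify that it fulfills the hypotheses of Axiom~\ref{ax:struc} while violating its conclusion. First I would observe that both green nodes have exactly two neighbors, so $|N_G(u)| = |N_G(v)| = 2$ and we may take $\bar{N}_G(u) = N_G(u)$. The bijection $h$ required by the axiom is then the obvious one sending $u_1 \mapsto v_1$ and $u_2 \mapsto v_2$ (using the symmetry of each component, the pairing $u_1 \mapsto v_2, u_2 \mapsto v_1$ would work equally well).

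The core of the proof is then a direct computation of $WDC$ values. The key quantitative ingredients are (i) the degree sequences within each component, and (ii) the shortest-path distances from each relevant node to every other node in its component. Because each component has at most seven nodes and diameter three, both ingredients can be read off the figure by inspection. I would organize the computation into four tables (for $u$, $v$, $u_1$, $v_1$, with $u_2, v_2$ handled by symmetry), listing for each node $w$ the pair $(dist(\cdot, w), \deg(w))$ and summing $\deg(w)/dist(\cdot, w)$.

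The calculation should reveal the following tension, which is the conceptual heart of the counterexample. The right-hand component gives $v$ two neighbors of degree three, while the left-hand component gives $u$ two neighbors of degree two; this alone already makes $WDC_v > WDC_u$. However, each of $u$'s neighbors, say $u_1$, lies next to the high-degree hub $u_3$ (degree four), so $WDC_{u_1}$ picks up a large contribution of $\deg(u_3)/1 = 4$ that its counterpart $WDC_{v_1}$ cannot match, since $v_1$'s only near-degree neighbor is $v$ itself (degree two) and its other neighbors $v_3, v_4$ are leaves. The point I expect to emphasize is that in the weighted degree formula, degrees of far-away high-degree vertices are discounted by distance, so concentrating the degree mass in a single hub (as the left component does) enhances the neighbors' centrality more than it enhances $u$'s own centrality.

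No step is really an obstacle — the whole argument is a handful of arithmetic sums — but the delicate part is checking that the arithmetic really breaks the right way: one must simultaneously achieve $WDC_{u_1} > WDC_{v_1}$, $WDC_{u_2} > WDC_{v_2}$, and $WDC_u < WDC_v$, while the hypotheses of Axiom~\ref{ax:struc} also demand that $u$ and $v$ have the same number of neighbors. Once the numbers check out, the contradiction with Axiom~\ref{ax:struc} is immediate, since the axiom would force $WDC_u > WDC_v$, completing the proof.
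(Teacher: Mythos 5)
Your proposal is correct and follows essentially the same route as the paper: the same two-component counterexample of Figure~\ref{fig:WDCStruc}, the same bijection $u_1 \mapsto v_1$, $u_2 \mapsto v_2$, and the same direct computation, which indeed yields $WDC_{u_1}=WDC_{u_2}=8 > 37/6 = WDC_{v_1}=WDC_{v_2}$ while $WDC_u = 20/3 < 8 = WDC_v$, contradicting Axiom~\ref{ax:struc}. (Only a cosmetic slip: the right component has diameter four, not three, but this plays no role in the argument.)
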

\begin{proof}
Again we construct a counterexample to prove it. Clearly, $WDC_{u_1}$  $= $ $1 \times 2 + 1 \times 4 + 1/2 \times 2 + 1/2 \times 1 + 1/2 \times 1$ $=$ $8$. So from the structural symmetry, $WDC_{u_1} = 8$.

Similarly, one can check, $WDC_{v_1}$ $=$ $WDC_{v_2}$ $=$ $37/6$. So, $WDC_{u_1} > WDC_{v_1}$ and $WDC_{u_2} > WDC_{v_2}$ and hence from Axiom \ref{ax:struc}, $F_{u} > F_{v}$.

Now, $WDC_{u}$ $=$ $2 \time 1/1 \times 2 + 1/2 \times 4 + 2 \times 1/3 \times 1$ $=$ $20/3$, and similarly, $WDC_{v}=8$ $> WDC_u$. This is a contradiction.
\end{proof}

\subsection{Eigenvector Centrality}

Eigenvector Centrality (EC) is defined as:
\begin{align}
	EC_u(G) = x_u
\end{align}
where $x_u$ is the $u$-th component of the eigenvector $x$ corresponding to the maximum eigenvalue of the adjacency matrix of the graph $G$. Throughout the rest of the paper, we assume that the eigenvectors are 2-norm normalized (unit vectors).

\begin{lemma}
Eigenvector centrality satisfies Axiom \ref{ax:iso} and \ref{ax:min}.
\end{lemma}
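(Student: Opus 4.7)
The plan is to unpack the definition of $EC$ via the adjacency matrix and exploit the relationship between the adjacency matrices under isomorphism and under isolation of a node.

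For Axiom \ref{ax:iso}, I would first let $P$ be the $n\times n$ permutation matrix induced by the structure-preserving bijection $f:V(G)\to V(H)$, i.e.\ $P_{f(u),u}=1$ and zero otherwise. Because $f$ is an isomorphism, the adjacency matrices satisfy $A_H = P A_G P^{T}$, so they are similar and share the same spectrum; in particular, they share the same largest eigenvalue $\lambda$. If $x$ is a unit eigenvector of $A_G$ for $\lambda$, then $y:=Px$ satisfies $A_H y = P A_G P^T P x = P A_G x = \lambda P x = \lambda y$, and $\|y\|_2=\|x\|_2=1$ since $P$ is orthogonal. By construction $y_{f(v)} = x_v$, so $EC_{f(v)}(H) = x_v = EC_v(G)$. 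The only subtlety is that $EC$ is defined via ``the'' dominant eigenvector, which is unique only up to sign when $\lambda$ is simple and ambiguous beyond sign when $\lambda$ has higher multiplicity; I would handle this by remarking that the map $x\mapsto Px$ sets up a canonical correspondence between the two eigenspaces, so that whichever convention is used on $G$ can be transported to $H$ to give matching values. I expect the eigenspace ambiguity to be the principal obstacle to a fully watertight statement, but it is a definitional issue rather than a mathematical one.

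For Axiom \ref{ax:min}, I would argue directly from the eigen-equation. If $v$ is isolated then the $v$-th row of $A_G$ is identically zero, so for \emph{any} vector $x$ one has $(A_G x)_v = \sum_w A_{v,w} x_w = 0$. If $x$ is a dominant eigenvector with eigenvalue $\lambda$, this forces $\lambda x_v = 0$. Whenever $G$ contains at least one edge the largest eigenvalue is strictly positive (the $2\times 2$ principal submatrix indexed by the endpoints of any edge has eigenvalues $\pm 1$, so by interlacing $\lambda \ge 1 > 0$), which yields $x_v = 0 = EC_v(G)$. The degenerate case in which $G$ has no edges at all makes $A_G$ the zero matrix and therefore leaves $EC$ undefined by the formula; I would dispose of this case by adopting the natural convention $EC_v(G)=0$ for every vertex of an edgeless graph, which is consistent with Axiom \ref{ax:iso} and keeps Axiom \ref{ax:min} satisfied.

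Overall the two verifications are short and mechanical once the eigenvector is chosen; the subtleties lie entirely in the well-definedness of $EC$ rather than in the axioms themselves, and I would flag this explicitly rather than try to grind through all degenerate subcases.
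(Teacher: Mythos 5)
Your proof is correct; the paper itself skips the argument for this lemma as ``straightforward,'' and what you give is precisely the standard argument the authors presumably have in mind (permutation similarity $A_H = P A_G P^{T}$ for Axiom~\ref{ax:iso}, and the zero row forcing $\lambda x_v = 0$ with $\lambda > 0$ for Axiom~\ref{ax:min}). Your flagged caveats about eigenvector non-uniqueness and the edgeless graph are reasonable definitional conventions and go beyond what the paper addresses.
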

Again we skip the proof as it is straight forward.

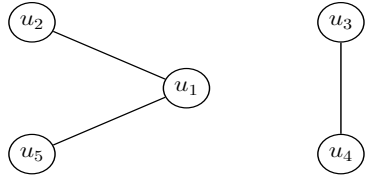
\begin{figure}[t!]
\centering
\resizebox{1.9in}{0.9in}{
\begin {tikzpicture}[auto, node distance =1.5 cm and 3cm, on grid, every loop/.style={},thick, state/.style={circle,draw,minimum size=3em,font=\sffamily\Large\bfseries}]

\node[state] (U1) {$u_1$};
\node[state] (U2) [above left=of U1] {$u_2$};
\node[state] (U3) [below left=of U1] {$u_5$};
\node[state] (U4) [above right=of U1] {$u_3$};
\node[state] (U5) [below right=of U1] {$u_4$};

\path (U1) edge (U2);
\path (U1) edge (U3);
\path (U4) edge (U5);

\end{tikzpicture}
}
\caption{Eigenvector Centrality does not satisfy Axiom \ref{ax:loc}. $EC_{u_3}(G) \neq EC_{u_3}(G[K_{u_3}(G)])$ and $EC_{u_4}(G) \neq EC_{u_4}(G[K_{u_4}(G)])$.}
\label{fig:EVloc}
\end{figure}








\begin{lemma}
Eigenvector Centrality does not satisfy Axiom \ref{ax:loc}.
\end{lemma}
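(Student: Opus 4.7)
The plan is to exploit the counterexample in Figure \ref{fig:EVloc}, where the graph $G$ consists of two disjoint components: $C_1 = \{u_1, u_2, u_5\}$ forming a path $u_2 - u_1 - u_5$, and $C_2 = \{u_3, u_4\}$ forming a single edge. Because the adjacency matrix $A$ of $G$ is block-diagonal with blocks $A_1$ (for $C_1$) and $A_2$ (for $C_2$), the spectrum of $A$ is the union of the spectra of $A_1$ and $A_2$, and every eigenvector of $A$ can be chosen so that it is supported on exactly one block.

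Next I would compute the principal eigenvalues of the two blocks. The path $P_3$ has largest eigenvalue $\sqrt{2}$, while the single edge $K_2$ has largest eigenvalue $1$. Therefore the maximum eigenvalue of $A$ is $\sqrt{2}$, attained only in block $A_1$. The corresponding unit eigenvector $x$ of $A$ has the form $x = (x_{u_1}, x_{u_2}, 0, 0, x_{u_5})$, with the entries for $u_3$ and $u_4$ forced to zero. In particular, $EC_{u_3}(G) = 0$.

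On the other hand, the subgraph $G[K_{u_3}(G)]$ is simply $K_2$ on the vertex set $\{u_3, u_4\}$. Its adjacency matrix has principal eigenvalue $1$ with unit eigenvector $(1/\sqrt{2}, 1/\sqrt{2})$, so $EC_{u_3}(G[K_{u_3}(G)]) = 1/\sqrt{2} \neq 0 = EC_{u_3}(G)$, violating Axiom \ref{ax:loc}. An identical argument applies to $u_4$.

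The main obstacle is not really difficulty but rather a subtle point of rigor: I must justify that the principal eigenvector of $A$ is genuinely zero on the smaller component, rather than being split arbitrarily between the two blocks. This follows because the eigenspace corresponding to the strictly largest eigenvalue $\sqrt{2}$ of the block-diagonal matrix is one-dimensional and lives entirely within the rows and columns indexed by $C_1$, so normalization to unit $2$-norm (as stipulated in the paper) uniquely pins the entries on $C_2$ to zero. Once that observation is made, the remaining arithmetic is immediate.
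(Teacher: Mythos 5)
Your proposal is correct and follows essentially the same route as the paper: it uses the very counterexample of Figure \ref{fig:EVloc} (path on $\{u_1,u_2,u_5\}$ plus the edge $\{u_3,u_4\}$) and compares $EC_{u_3}(G)=0$ with $EC_{u_3}(G[K_{u_3}(G)])=1/\sqrt{2}$, which is exactly what the paper's numerical values $[0.7071,0.5,0.5,0,0]$ versus $[0.7071,0.7071]$ express. The only difference is that you justify the zero entries via the block-diagonal structure and the simplicity of the top eigenvalue $\sqrt{2}$, a welcome bit of rigor that the paper leaves implicit.
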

\begin{proof}
Consider the graph in Figure \ref{fig:EVloc}. The    eigenvector centrality of the first graph is [0.7071, 0.5, 0.5, 0, 0], and the maximum eigenvalue is 1.4142. The same centrality measure for the component containing $\{u_1, u_2, u_3\}$ is [0.7071, 0.5, 0.5], but for the component containing $\{u_3, u_4\}$, it is [0.7071, 0.7071]. Hence it violates Axiom \ref{ax:loc}.
\end{proof}

\begin{figure}
\center
\resizebox{1.3in}{1in}{
\begin {tikzpicture}[auto, node distance =1.5 cm and 3cm, on grid, every loop/.style={},thick, state/.style={circle,draw,minimum size=3em,font=\sffamily\Large\bfseries}]

\node[state, fill=green] (U) {$u$};
\node[state] (V) [above left=of U] {$v$};
\node[state, fill=green] (W) [below left=of U] {$w$};

\path (U) edge (V);
\path (U) edge (W)[dashed];

\end{tikzpicture}
}
\caption{Eigenvector Centrality does not satisfy Axiom \ref{ax:mon}. Adding the new edge $\{u,w\}$ does not increase the centrality of the node $u$.}
\label{fig:ECmon}
\end{figure}
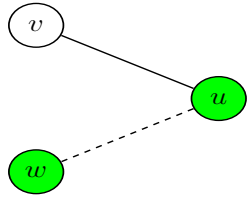

\begin{lemma}
Eigenvector centrality does not satisfy Axiom \ref{ax:mon}.
\end{lemma}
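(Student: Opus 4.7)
The plan is to exhibit the counterexample already sketched in Figure~\ref{fig:ECmon}. Let $G=(V,E)$ with $V=\{u,v,w\}$ and $E=\{\{u,v\}\}$, so that $w$ is isolated, and let $G'=(V,E\cup\{u,w\})$, which is the path $v$--$u$--$w$. Since $F$ must strictly increase at both endpoints of the added edge, it suffices to show that $EC_u(G')=EC_u(G)$.

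First I would compute $EC_u(G)$. Ordering the vertices as $(u,v,w)$, the adjacency matrix of $G$ is block-diagonal with a single $K_2$ block on $\{u,v\}$ and a zero block on $\{w\}$, so its spectrum is $\{1,-1,0\}$. The unit eigenvector for the dominant eigenvalue $\lambda_{\max}=1$ is $(1/\sqrt{2},\,1/\sqrt{2},\,0)$, giving $EC_u(G)=1/\sqrt{2}$.

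Next I would compute $EC_u(G')$. The adjacency matrix of the path $P_3$ (with $u$ in the middle) has characteristic polynomial $-\lambda^3+2\lambda$, so its eigenvalues are $\{\sqrt{2},0,-\sqrt{2}\}$ and $\lambda_{\max}=\sqrt{2}$. Solving $Ax=\sqrt{2}x$ yields $x_v=x_w=x_u/\sqrt{2}$, and unit normalization $x_u^2+x_v^2+x_w^2=1$ forces $x_u=1/\sqrt{2}$. Hence $EC_u(G')=1/\sqrt{2}=EC_u(G)$, so the inequality $EC_u(G')>EC_u(G)$ required by Axiom~\ref{ax:mon} fails, and this single computation already refutes the axiom.

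There is essentially no obstacle here beyond the routine eigencomputations; the only thing worth flagging is that Axiom~\ref{ax:mon} is stated as a strict inequality on \emph{both} endpoints, so one might worry that the other endpoint $w$ (whose centrality goes from $0$ to $1/2$) somehow ``absorbs'' the axiom. It does not: the axiom is a conjunction, and $u$'s component remaining unchanged is enough for a counterexample. Intuitively, the reason the calculation works is that attaching a pendant $w$ to $u$ symmetrizes the two leaves $v$ and $w$ around the central vertex $u$; the Perron eigenvector is forced into the shape $(\alpha,\beta,\alpha)$, and unit normalization pins the central coordinate at $1/\sqrt{2}$, exactly matching its value in the original edge.
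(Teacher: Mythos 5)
Your proposal is correct and follows exactly the paper's argument: the same counterexample from Figure~\ref{fig:ECmon} (adding the pendant edge $\{u,w\}$ to the single-edge graph on $\{u,v,w\}$), with the same conclusion that $EC_u(G')=EC_u(G)=1/\sqrt{2}$ violates the strict inequality required at endpoint $u$. The only difference is that you spell out the routine eigenvalue computations that the paper states numerically, and you correctly note that failure at one endpoint suffices since the axiom is a conjunction.
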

\begin{proof}
We give a counterexample as shown in Figure \ref{fig:ECmon}. Again the graph $G'$ is obtained from the graph $G$ by adding the new edge $\{u,w\}$. But $EC(G)=[0.7071\;, 0.7071\;, 0]$ (in the order of $u$, $v$ and $w$) and $EC(G')=[0.7071\;, 0.5\;, 0.5]$. Hence $EC_u(G) = EC_u(G')$, which violets Axiom \ref{ax:mon}.
\end{proof}

\begin{lemma}
Eigenvector Centrality does not satisfy Axiom \ref{ax:hop}.
\end{lemma}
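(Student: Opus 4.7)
The plan is to exhibit a single explicit counterexample: a small graph $G$, a non-edge $\{u,v\}$, and two nodes $z_h,z_{\bar h}$ at hop distances $h<\bar h$ from $(u,v)$ in $G$ such that after forming $G'=(V,E\cup\{u,v\})$ we get $|EC_{z_h}(G')-EC_{z_h}(G)| \le |EC_{z_{\bar h}}(G')-EC_{z_{\bar h}}(G)|$. Because the Perron eigenvector is a global object and is renormalized to unit 2-norm, the pointwise change in $EC$ is not expected to decay monotonically with distance from the perturbation; this is the structural intuition we want to exploit.

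First I would set up the template: take a graph on a small number of nodes (say $5$ to $7$) with a clear "near side" and "far side" relative to the prospective new edge $\{u,v\}$, and arrange the degrees so that the near side already has high eigenvector weight while the far side is a lightly attached appendage. The reason is that on the near side the Perron eigenvector is close to saturated, so adding $\{u,v\}$ changes those entries only by a small local amount; on the far side, however, the same edge raises the Perron eigenvalue slightly and, through the normalization constraint $\sum_i x_i^2=1$, can shift the leaf-like entries by a larger absolute amount. A natural candidate is a graph of the same shape as the one used in Lemma~\ref{lemma:CC} (Figure~\ref{fig:CC}), or a simple lollipop: a short path ending in a pendant vertex $z_{\bar h}$ at distance $\bar h\ge 2$ from $\{u,v\}$, with an intermediate node $z_h$ at distance $1$.

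Second, I would compute $EC$ before and after edge addition directly. For a connected graph on $\le 7$ nodes the dominant eigenpair $(\lambda,x)$ of the adjacency matrix can be obtained either by solving the characteristic polynomial symbolically or, as the author does elsewhere (e.g.\ the 5-entry vector $[0.7071,0.5,0.5,0,0]$ in the proof of the Locality lemma), by reporting numerical values to four decimal places. I would then read off $EC_{z_h}(G),EC_{z_h}(G'),EC_{z_{\bar h}}(G),EC_{z_{\bar h}}(G')$ from the two Perron vectors and verify the inequality.

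The main obstacle will be choosing the graph so that the counterexample is genuinely forced rather than a numerical coincidence, and so that the arithmetic is clean enough to display. Unlike degree, closeness, or betweenness centrality, eigenvector centrality has no local closed form, so I cannot argue by a local perturbation estimate; I must rely on the fact that the unit-norm constraint couples distant entries. A clean workaround is to pick a graph with an automorphism that forces several coordinates of the Perron vector to coincide, reducing the eigenvalue problem to a small symbolic calculation. The counterexample can then be stated compactly, in the same style as the previous lemmas in this section.
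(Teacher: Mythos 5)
Your overall strategy coincides with the paper's: the lemma is established by a single explicit counterexample, computing the (unit 2-norm) Perron eigenvector of a small graph before and after adding one edge and comparing the absolute changes at two nodes at different hop distances. However, as written your argument has a genuine gap: you never actually exhibit the graph, the added edge, the pair $(z_h,z_{\bar h})$, or the eigenvector values, and for an existential claim of this kind the concrete counterexample \emph{is} the proof. The heuristic you offer (a saturated near side plus a lightly attached far appendage whose entries shift because of the unit-norm constraint) is plausible but unverified; until the two dominant eigenvectors are computed, nothing guarantees that your lollipop template yields the required inequality, since eigenvector perturbations do not in general follow such clean local intuition.

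For comparison, the paper's counterexample (Figure \ref{fig:EVhop}) is smaller and of a different flavor than what you sketch: a $5$-node graph with edges $\{u_1,u_2\}$, $\{u_1,u_5\}$, $\{u_2,u_3\}$, $\{u_2,u_4\}$, $\{u_4,u_5\}$, to which the edge $\{u_2,u_5\}$ is added. The normalized Perron vectors are $[0.4647,\,0.5573,\,0.2610,\,0.4647,\,0.4352]$ before and $[0.4119,\,0.5825,\,0.2169,\,0.4119,\,0.5237]$ after, so the endpoint $u_2$ (at hop $0$ from $(u_2,u_5)$) changes by only $0.0252$ while its neighbor $u_1$ (at hop $1$) changes by $0.0528$, violating Axiom \ref{ax:hop}. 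The violation thus already occurs between the $0$-hop and $1$-hop levels, driven by eigenvector mass being pulled away from $u_1$ toward the reinforced $u_2$--$u_5$ connection, not by a distant pendant gaining weight through renormalization as your template anticipates. To complete your proof you must fix one concrete graph and report (or derive symbolically, e.g.\ by exploiting an automorphism as you suggest) the two Perron vectors and the resulting inequality.
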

\begin{proof}
Please see Figure \ref{fig:EVhop}. The (normalized) 
eigenvector centrality of the graph before adding the edge is [0.4647 0.5573 0.2610 0.4647 0.4352]$^T$, whereas the same after adding the edge $\{u_2, u_5\}$ is [0.4119 0.5825 0.2169 0.4119 0.5237]$^T$. Clearly $|EC_{u_2}(G') - EC_{u_2}(G)| = 0.0252$, but $|EC_{u_1}(G') - EC_{u_1}(G)| = 0.0528$ and hence $|EC_{u_1}(G') - EC_{u_1}(G)| > |EC_{u_2}(G') - EC_{u_2}(G)|$. This is a contradiction since $u_2 \in H_G^0(u_2,u_5)$ and $u_1 \in H_G^1(u_2,u_5)$.
\end{proof}

\begin{lemma}
Eigenvector Centrality satisfies Axiom \ref{ax:struc}.
\end{lemma}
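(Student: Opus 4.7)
The plan is to exploit the eigenvalue equation $Ax = \lambda x$ directly, where $\lambda$ is the largest eigenvalue of the adjacency matrix and $x$ is the corresponding Perron eigenvector normalized to have unit $2$-norm. Reading off the $u$-th row gives the recursive identity
\begin{align*}
\lambda\, x_u \;=\; \sum_{w \in N_G(u)} x_w,
\end{align*}
and similarly for $v$. This turns Axiom \ref{ax:struc} into a purely algebraic comparison between the two neighbourhood sums, so the proof should reduce to a term-by-term argument using the bijection $h$.

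First I would split the right-hand side for $u$ as $\sum_{w \in N_G(u)} x_w = \sum_{a \in \bar{N}_G(u)} x_a + \sum_{w \in N_G(u)\setminus \bar{N}_G(u)} x_w$. The hypothesis $F_a(G) > F_{h(a)}(G)$ reads $x_a > x_{h(a)}$, and because $h$ is a bijection from $\bar{N}_G(u)$ onto $N_G(v)$ I can sum these strict inequalities to obtain $\sum_{a \in \bar{N}_G(u)} x_a > \sum_{b \in N_G(v)} x_b = \lambda\, x_v$. For the remaining term $\sum_{w \in N_G(u)\setminus \bar{N}_G(u)} x_w$ I would invoke the Perron--Frobenius theorem: the adjacency matrix is non-negative, so its principal eigenvector can be chosen with non-negative entries. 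This gives $\sum_{w \in N_G(u)\setminus \bar{N}_G(u)} x_w \geq 0$. Adding the two estimates yields $\lambda\, x_u > \lambda\, x_v$, and since $\lambda > 0$ for any graph containing at least one edge (which it must, because $u$ has neighbours whose centrality is strictly positive), dividing gives $EC_u(G) = x_u > x_v = EC_v(G)$.

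The main subtlety I expect is justifying the sign structure of $x$ when $G$ is disconnected, since in that case the Perron eigenvector need not be supported on every component. However, the hypothesis forces $x_a > x_{h(a)} \geq 0$ for each $a \in \bar{N}_G(u)$, so in particular $x_a > 0$, which places $u$ together with its relevant neighbours in a component whose spectral radius realises $\lambda$; the entries of $x$ on any other component can be taken to be $0$ without loss of generality. With that convention, the inequality $\sum_{w \in N_G(u)\setminus \bar{N}_G(u)} x_w \geq 0$ holds regardless of which component $v$ lies in, and the rest of the argument goes through unchanged.
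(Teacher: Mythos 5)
Your proposal is correct and follows essentially the same route as the paper's proof: read off the row of the eigenvalue equation $Ax=\lambda x$ at $u$ and $v$, use the bijection $h$ to compare the neighbourhood sums, and invoke non-negativity of the Perron eigenvector to discard the extra terms in $N_G(u)\setminus \bar{N}_G(u)$. Your write-up is in fact slightly more careful than the paper's (which misplaces the factor $\lambda_m$, writing $EC_u(G)=\lambda_m\sum_{w\in N_G(u)}EC_w(G)$ instead of $\lambda_m EC_u(G)=\sum_{w\in N_G(u)}EC_w(G)$, and leaves non-negativity and the disconnected case implicit), but the underlying argument is the same.
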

\begin{proof}
According to the definition, eigenvector centrality is $x$ where,
\begin{flalign} \label{eq:EVE}
    Ax = \lambda_m x
\end{flalign}
$A \in \mathbb{R}^{n \times n}$ is the adjacency matrix of the given graph $G$, $\lambda_m$ is the maximum eigenvalue of $A$, and $x \in \mathbb{R}$ is a vector whose components are eigenvector centrality of the nodes of $G$. Now to prove that it satisfies Axiom \ref{ax:struc}, suppose there are two nodes $u$ and $v$ such that $|N_G(u)| \geq |N_G(v)|$ and there exists a subset $\bar{N}_G(u) \subseteq N_G(u)$ with $|\bar{N}_G(u)| = |N_G(v)|$, such that there is a bijection $h$ which attaches each vertex $a \in \bar{N}_G(u)$ to a unique vertex $h(a) \in N_G(v)$ so that $EC_a(G) > EC_{h(a)}(G)$. Clearly in that case, $\sum\limits_{w \in N_G(u)} EC_w(G) > \sum\limits_{w \in N_G(v)} EC_w(G)$. 

But according to the Equation \ref{eq:EVE}, $EC_u(G) = \lambda_m \times \sum\limits_{w \in N_G(u)} EC_w(G) $ $>$ $\lambda_m \times \sum\limits_{w \in N_G(v)} EC_w(G)$ $=$ $EC_v(G)$. Hence proved.
\end{proof}

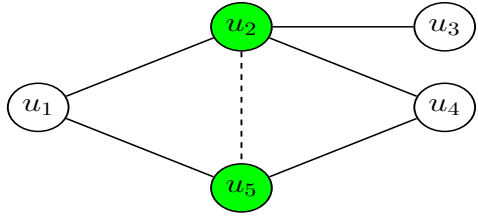
\begin{figure}
\center
\resizebox{2.5in}{1.1in}{
\begin {tikzpicture}[auto, node distance =1.5 cm and 3cm, on grid, every loop/.style={},thick, state/.style={circle,draw,minimum size=3em,font=\sffamily\Large\bfseries}]

\node[state] (U1) {$u_1$};
\node[state, fill=green] (U2) [above right=of U1] {$u_2$};
\node[state, fill=green] (U5) [below right=of U1] {$u_5$};
\node[state] (U3) [right=of U2] {$u_3$};
\node[state] (U4) [below=of U3] {$u_4$};

\path (U1) edge (U2);
\path (U1) edge (U5);
\path (U2) edge (U3);
\path (U2) edge (U4);
\path (U4) edge (U5);

\path (U2) edge (U5) [dashed];

\end{tikzpicture}
}
\caption{Eigenvector Centrality does not satisfy Axiom \ref{ax:hop}. The increase of eigenvector centrality in $u_1$ is more than the increase of that in $u_2$ because of adding the new edge $\{u_2,u_5\}$.}
\label{fig:EVhop}
\end{figure}

\subsection{Decaying Degree Centrality}
Let us define a new centrality measure, Decaying Degree Centrality (DDC) as:
\begin{align}
	DDC_u(G) = \sum\limits_{w \in V} \frac{Deg(w)}{n ^ {2 \times dist(u,w)}}
	\label{eq:DDC}
\end{align}
Note that DDC is also a generalization of degree centrality. But compared to the weighted degree centrality, contribution of a node to the centrality of the node under consideration decreases exponentially with an increase in the distance between them. We characterize DDC with respect to our axiomatic framework in the form of the following lemmas.

\begin{lemma}
DDC satisfies Axioms \ref{ax:iso}, \ref{ax:min} and \ref{ax:mon}.
\end{lemma}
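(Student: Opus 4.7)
The plan is to verify each of the three axioms separately, exploiting the fact that DDC is a sum over all vertices whose generic term depends only on a degree and a distance. Since both degrees and distances are structural quantities, the isomorphism axiom should fall out immediately; the only non-trivial case is the strict monotonicity under edge addition.

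For Axiom \ref{ax:iso}, I would argue that if $f:V(G)\to V(H)$ is a graph isomorphism, then $\mathit{Deg}_H(f(w))=\mathit{Deg}_G(w)$ for every $w$, and $\mathit{dist}_H(f(u),f(w))=\mathit{dist}_G(u,w)$. The bijection $w\mapsto f(w)$ therefore gives a term-by-term match between $DDC_v(G)$ and $DDC_{f(v)}(H)$, whence the two sums coincide. For Axiom \ref{ax:min}, if $v$ is isolated then $\mathit{Deg}_G(v)=0$ and $\mathit{dist}(v,w)=\infty$ for every $w\neq v$ in $G$ (with the convention $1/n^{\infty}=0$). The $w=v$ term contributes $\mathit{Deg}_G(v)/n^{0}=0$ and every other term vanishes, so $DDC_v(G)=0$ as required.

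The main work is Axiom \ref{ax:mon}. Let $G=(V,E)$, fix a non-edge $\{u,v\}$, and put $G'=(V,E\cup\{u,v\})$. I would first observe two monotone structural facts: adding an edge can only increase degrees (and it does so only at $u$ and $v$, each by exactly one), and it can only decrease pairwise distances. Consequently, for every $w\in V$,
\begin{equation*}
\frac{\mathit{Deg}_{G'}(w)}{n^{\,2\,\mathit{dist}_{G'}(u,w)}} \;\geq\; \frac{\mathit{Deg}_{G}(w)}{n^{\,2\,\mathit{dist}_{G}(u,w)}},
\end{equation*}
because the numerator is at least as large and the denominator at most as large. Summing over $w$ yields $DDC_u(G')\geq DDC_u(G)$. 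To upgrade this to a strict inequality, I would isolate the term $w=u$: here $\mathit{dist}_{G'}(u,u)=\mathit{dist}_G(u,u)=0$, while $\mathit{Deg}_{G'}(u)=\mathit{Deg}_G(u)+1$, so that single term strictly increases by $1$. Hence $DDC_u(G')>DDC_u(G)$, and the symmetric argument using the $w=v$ term gives $DDC_v(G')>DDC_v(G)$.

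I do not foresee a real obstacle; the only subtlety is being careful about the $w=u$ and $w=v$ diagonal terms in the sum (which are what drive the strict inequality) and being explicit about the convention $\mathit{dist}=\infty \Rightarrow 1/n^{2\,\mathit{dist}}=0$ when $u$ is isolated in $G$, so that the monotonicity argument still applies when adding an edge creates a brand-new component or merges two. No case analysis on connectedness is actually needed because the diagonal term alone already certifies strict growth.
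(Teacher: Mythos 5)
Your proof is correct: the term-by-term verification (isomorphism preserves degrees and distances, the isolated node contributes only vanishing terms, and under edge addition every summand is non-decreasing while the diagonal term $w=u$, resp.\ $w=v$, strictly increases by $1$) is exactly the routine check intended here, and your care with the $dist=\infty$ convention is a welcome detail. The paper itself omits the proof as trivial, so there is no divergence in approach to report.
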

We skip the proof as it is trivial to check.

\begin{lemma}
DDC does not satisfy Axiom \ref{ax:loc}.
\end{lemma}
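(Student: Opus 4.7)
My plan is to prove this lemma by constructing a small counterexample, which is the same strategy used throughout this section. The key structural observation that I would exploit is that the formula for $DDC$ involves $n = |V|$ explicitly in the denominator $n^{2 \cdot dist(u,w)}$. This means that the value of $DDC_u(G)$ depends on the total number of nodes in the entire graph, not just on the component containing $u$. So even if adding nodes outside $K_u(G)$ contributes nothing to the numerator (because such nodes are at distance $\infty$ from $u$, or are isolated with degree $0$), the denominator for the remaining terms changes when we pass from $G$ to $G[K_u(G)]$.

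First I would take the smallest possible counterexample: let $G = (V,E)$ with $V = \{u, v, w\}$ and $E = \{\{u,v\}\}$, so that $w$ is an isolated node and $K_u(G) = \{u,v\}$. Then the subgraph $G[K_u(G)]$ has only two nodes and one edge. Next I would compute both sides directly. For $DDC_u(G)$, $n = 3$ and only the terms indexed by $u$ and $v$ contribute (the term for $w$ vanishes since $Deg(w) = 0$), giving $DDC_u(G) = 1 + 1/3^2 = 10/9$. For $DDC_u(G[K_u(G)])$, $n = 2$, and we obtain $DDC_u(G[K_u(G)]) = 1 + 1/2^2 = 5/4$. Since $10/9 \neq 5/4$, Axiom \ref{ax:loc} is violated.

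There is no real obstacle here, since the discrepancy is inherent in the definition: the exponential base $n$ is a global quantity rather than a quantity local to the connected component. The only thing to be careful about is to keep the counterexample minimal and to verify that the contributions from outside the component really do vanish from the numerator, so that the failure of locality can be attributed unambiguously to the change in $n$. If desired, one could additionally note that the same phenomenon occurs more generally whenever $G$ has at least two connected components, which underscores that the violation is structural rather than due to a contrived configuration.
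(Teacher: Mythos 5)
Your proof is correct and follows essentially the same reasoning as the paper, which simply observes that the global node count $n$ in the denominator of the DDC formula changes when passing to $G[K_u(G)]$; you merely make this explicit with a concrete three-node counterexample ($10/9 \neq 5/4$), which is a valid instantiation of the paper's sketch.
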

\begin{proof}
One can check that, as the total number of nodes $n$ in the graph comes in the denominator in the definition of DDC in equation \ref{eq:DDC}, it violates Axiom \ref{ax:loc} when considering the subgraph containing just the respective component.
\end{proof}

\begin{lemma}
DDC satisfies Axiom \ref{ax:hop}.
\end{lemma}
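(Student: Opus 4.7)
The plan is to compare $\Delta_z := DDC_z(G') - DDC_z(G)$ for $z = z_h$ and $z = z_{\bar h}$ by bounding the former from below and the latter from above, exploiting the fact that the exponential denominator $n^{2\,dist}$ in the definition of DDC decays fast enough to dominate the polynomial-size sum of degrees. As a first step I would rewrite the change as
\[
\Delta_z \;=\; \frac{1}{n^{2\,d'_u}} + \frac{1}{n^{2\,d'_v}} + \sum_{w \in V} Deg_G(w) \left( \frac{1}{n^{2\,dist_{G'}(z,w)}} - \frac{1}{n^{2\,dist_G(z,w)}} \right),
\]
where $d'_u = dist_{G'}(z,u)$ and $d'_v = dist_{G'}(z,v)$. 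The first two terms are produced by the $+1$ each in $Deg(u)$ and $Deg(v)$, while the sum records the (non-negative) gain from possibly shortened distances. Since adding an edge never lengthens any distance, every summand is non-negative, so $|\Delta_z| = \Delta_z$ and the absolute values in Axiom \ref{ax:hop} can be dropped from the outset.

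Next I would establish a lower bound for $\Delta_{z_h}$. Writing $h = \min(dist_G(z_h,u), dist_G(z_h,v))$ and assuming without loss of generality that the minimum is attained at $u$, I would argue that the new edge cannot shorten $dist(z_h, u)$: any alternative path through $\{u,v\}$ must have length at least $dist_G(z_h, v) + 1 \geq h + 1$. Hence $d'_u = h$, and the first term of the decomposition alone already yields $\Delta_{z_h} \geq 1/n^{2h}$.

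For the matching upper bound on $\Delta_{z_{\bar h}}$, the same triangle-inequality reasoning forces $d'_u, d'_v \geq \bar h$, so the two degree-change terms contribute at most $2/n^{2\bar h}$. The crucial step is controlling the distance-improvement sum: whenever $dist_{G'}(z_{\bar h}, w) < dist_G(z_{\bar h}, w)$, the new shortest path must cross $\{u,v\}$ and therefore has length at least $\min(d'_u, d'_v) + 1 \geq \bar h + 1$, so such a term is bounded by $Deg_G(w)/n^{2(\bar h + 1)}$. Summing over all $w$ and using $\sum_w Deg_G(w) = 2|E| \leq n(n-1)$ gives a total contribution of at most $n(n-1)/n^{2\bar h + 2} < 1/n^{2\bar h}$, so $\Delta_{z_{\bar h}} < 3/n^{2\bar h}$.

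Combining the two estimates, the lemma reduces to $1/n^{2h} > 3/n^{2\bar h}$, equivalently $n^{2(\bar h - h)} > 3$. Since $\bar h - h \geq 1$ and the axiom is vacuous unless $n \geq 2$, the left side is at least $4$, and the proof closes. I expect the main obstacle to be precisely the third paragraph: bounding the distance-improvement sum without tracking which vertices actually enjoy a shortcut. The trick is to use the uniform lower bound $\bar h + 1$ on any improved distance together with the factor $2$ in the exponent $n^{2\,dist}$ of DDC, which is exactly large enough to absorb the $O(n^2)$ total degree. This is where DDC's exponential decay pays off over the merely reciprocal decay of WDC, and it is the reason DDC satisfies Axiom \ref{ax:hop} while WDC does not.
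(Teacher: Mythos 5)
Your proof is correct and follows essentially the same route as the paper's: a lower bound of $1/n^{2h}$ on the change at the nearer node coming from the degree increment at the closer endpoint, and an upper bound on the change at the farther node obtained by observing that every affected contribution sits at (new) distance at least $\bar h$ and that the total degree mass is $O(n^2)$, which the factor $n^{2\,dist}$ absorbs. Your write-up is somewhat more explicit than the paper's (you justify dropping the absolute values, prove $d'_u=h$ via the triangle inequality, and package the bound as $3/n^{2\bar h}$ rather than $n^2/n^{2\bar h}$), but the underlying argument is the same.
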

\begin{proof}
As usual, $G=(V,E)$, and $G'=(U,V \cup \{u,v\}$ where $\{u,v\} \notin E$. Suppose $\exists h, \bar{h} \in \mathbb{N}$, s.t., $h < \bar{h}$ and $\exists$ $w_h \in H_G^h(u,v)$ and $w_{\bar{h}} \in H_G^{\bar{h}}(u,v)$. Hence, $|DDC_{w_h}(G') - DDC_{w_h}(G)| \geq \frac{1}{n^{2.h}}$. Now, \\
\begin{flalign*}
&|DDC_{w_{\bar{h}}}(G') - DDC_{w_{\bar{h}}}(G)|&\\
&\leq \frac{x_{\bar{h}}}{n^{2.\bar{h}}} + \frac{x_{\bar{h}+1}}{n^{2.(\bar{h}+1)}} + \cdots &\\
&\text{[$\because$ No change is possible for the nodes below $\bar{h}$-hop neighbors,}&\\
&\text{and $x_{\bar{h}}$ is the sum of the degrees of the nodes which moves to} & \\
&\text{the $\bar{h}$-hop neighbor due to the addition of the new edge $\{u,v\}$]}&\\
&\leq \frac{x_{\bar{h}} + x_{\bar{h}+1} + \cdots}{n^{2.\bar{h}}} &\\
&< \frac{n^2}{n^{2.\bar{h}}} \;\; \leq \frac{1}{n^{2.h}} \;\; \leq |DDC_{w_h}(G') - DDC_{w_h}(G)|&
\end{flalign*}
\end{proof}

\begin{lemma} \label{lemma:DDC}
For any two nodes $u$ and $v$ $\in V$, $DDC_u(G) > DDC_v(G)$ if and only if there is a non-negative integer $h$ such that $\sum\limits_{w \in H_G^h(u)} degree(w) > \sum\limits_{w \in H_G^h(v)} degree(w)$ and $\sum\limits_{w \in H_G^{h'}(u)} degree(w) = \sum\limits_{w \in H_G^{h'}(v)} degree(w)$, $\forall h' =$ \\ $0,1,\cdots,h-1$
\end{lemma}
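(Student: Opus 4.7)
The plan is to rewrite $DDC_u(G)$ as a weighted sum stratified by hop distance and then show that the weights $1/n^{2h}$ fall off fast enough that a lexicographic comparison of the level sums is equivalent to a numeric comparison of the centralities. Concretely, for each node $w$ let $S_h(w) = \sum_{z \in H_G^h(w)} degree(z)$, so that
\begin{align*}
DDC_u(G) = \sum_{h \geq 0} \frac{S_h(u)}{n^{2h}}.
\end{align*}
The single key inequality I need is that for every node $w$ and every non-negative integer $h$,
\begin{align*}
\sum_{h' > h} \frac{S_{h'}(w)}{n^{2h'}} \;<\; \frac{1}{n^{2h}},
\end{align*}
which I will obtain by factoring out $1/n^{2(h+1)}$ and using the bound $\sum_{h'} S_{h'}(w) \leq \sum_{z \in V} degree(z) = 2|E| \leq n(n-1) < n^2$ (noting that nodes unreachable from $w$ contribute $0$ to $DDC_w$ anyway).

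For the ``if'' direction, I assume the existence of the smallest $h$ for which $S_h(u) > S_h(v)$ while $S_{h'}(u) = S_{h'}(v)$ for all $h' < h$. Then the first $h$ terms in $DDC_u(G) - DDC_v(G)$ cancel, the term at level $h$ equals $(S_h(u) - S_h(v))/n^{2h} \geq 1/n^{2h}$ since the $S_{h'}$ are integers, and the tail has absolute value
\begin{align*}
\Bigl| \sum_{h' > h} \frac{S_{h'}(u) - S_{h'}(v)}{n^{2h'}} \Bigr| \;\leq\; \max\Bigl(\sum_{h' > h} \tfrac{S_{h'}(u)}{n^{2h'}},\, \sum_{h' > h} \tfrac{S_{h'}(v)}{n^{2h'}}\Bigr) \;<\; \frac{1}{n^{2h}}
\end{align*}
by the key inequality applied to both $u$ and $v$. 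Combining these yields $DDC_u(G) > DDC_v(G)$ as required.

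For the ``only if'' direction I will argue by contraposition. Suppose no such $h$ exists. Either $S_{h'}(u) = S_{h'}(v)$ for all $h'$, in which case $DDC_u(G) = DDC_v(G)$, or the first index $h$ at which the sequences differ satisfies $S_h(u) < S_h(v)$; swapping the roles of $u$ and $v$ in the ``if'' direction then gives $DDC_v(G) > DDC_u(G)$. Either conclusion contradicts $DDC_u(G) > DDC_v(G)$. The main obstacle I anticipate is only the first key inequality: getting the constants sharp enough that the tail is strictly below $1/n^{2h}$ rather than just bounded by it. The bound $2|E| < n^2$ (strict because a simple graph has $2|E| \leq n(n-1)$) is exactly what makes the geometric tail beat the leading difference, so the proof hinges on this elementary but essential estimate.
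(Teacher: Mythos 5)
Your proposal is correct and follows essentially the same route as the paper: stratify $DDC$ by hop distance, note the level sums are integers, and use $\sum_w degree(w) = 2|E| \le n(n-1) < n^2$ to bound the geometric tail beyond level $h$ strictly below $1/n^{2h}$, so the lexicographic comparison of level sums decides the comparison of centralities. If anything, your write-up is the more careful version of the paper's argument---you keep the tail estimate on the correct side of the inequality and explicitly dispose of the case where all level sums coincide, both points the paper's own proof glosses over.
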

\begin{proof}
'$\Rightarrow$': 
We use proof by contradiction. Assume there is a non-negative integer $h$ such that $\sum\limits_{w \in H_G^h(u)} degree(w) < \sum\limits_{w \in H_G^h(v)} degree(w)$ and \\
$\sum\limits_{w \in H_G^{h'}(u)} degree(w) = \sum\limits_{w \in H_G^{h'}(v)} degree(w)$, $\forall h' = 0,1,\cdots,h-1$. Now,
\begin{flalign*}
&DDC_u(G) - DDC_v(G)&\\
&= \sum\limits_{h' \in \{0,1,\cdots\} } \sum\limits_{w \in H_G^{h'}(u)} \frac{degree(w)}{n^{2 \times h'}} - \sum\limits_{h' \in \{0,1,\cdots\} } \sum\limits_{w \in H_G^{h'}(v)} \frac{degree(w)}{n^{2 \times h'}}&\\
&= \frac{\sum\limits_{w \in H_G^h(u)} degree(w) - \sum\limits_{w \in H_G^h(v)} degree(w)}{n^{2 \times h}} + \sum\limits_{h' > h} \sum\limits_{w \in H_G^{h'}(u)} \frac{degree(w)}{n^{2 \times h'}}&\\
& \;\;\;\;\;\;\;\;\;\;\;\;\;\;\;\;\;\;\;\;\;\;\;\;\;\;\;\;\;\;\;\;\;\;\;\;\;\;\;\;\;\;\;\;\;\;\;\;\;\;\;\;\;\;\;\;\;\;\;\;\;\;\;\;\;\;\;\;\;\;\;\;\;\; - \sum\limits_{h' > h} \sum\limits_{w \in H_G^{h'}(v)} \frac{degree(w)}{n^{2 \times h'}}&\\
&< \frac{\sum\limits_{w \in H_G^h(u)} degree(w) - \sum\limits_{w \in H_G^h(v)} degree(w)}{n^{2 \times h}} - \sum\limits_{h' > h} \sum\limits_{w \in H_G^{h'}(v)} \frac{degree(w)}{n^{2 \times h'}}&\\
&\leq \frac{-1}{n^{2 \times h}} - \sum\limits_{h' > h} \sum\limits_{w \in H_G^{h'}(v)} \frac{degree(w)}{n^{2 \times h'}}&\\
&<\frac{-1}{n^{2 \times h}} - \frac{n^2}{n^{2 \times (h+1)}} \;\; = \;\; 0&
\end{flalign*}
which is a contradiction.

'$\Leftarrow$': The converse can be proved in the same way.
\end{proof}

\begin{figure}
\center
\resizebox{4in}{1.6in}{
\begin {tikzpicture}[auto, node distance =1.5 cm and 3cm, on grid, every loop/.style={},thick, state/.style={circle,draw,minimum size=3em,font=\sffamily\Large\bfseries}]

\node[state, fill=green] (U) {$u$};
\node[state] (U1) [above left=of U] {$u_1$};
\node[state] (U2) [below left=of U] {$u_2$};
\node[state] (U11) [left=of U1] {$u_{11}$};
\node[state] (U21) [left=of U2] {$u_{22}$};

\node[state, fill=green] (V) [right=of U] {$v$};
\node[state] (V1) [above right=of V] {$v_1$};
\node[state] (V2) [below right=of V] {$v_2$};
\node[state] (V11) [above right=of V1] {$v_{11}$};
\node[state] (V12) [right=of V1] {$v_{12}$};
\node[state] (V21) [right=of V2] {$v_{21}$};
\node[state] (V22) [below right=of V2] {$v_{22}$};

\path (U) edge (U1);
\path (U) edge (U2);
\path (U1) edge (U11);
\path (U2) edge (U21);
\path (U1) edge (U2);
\path (U11) edge (U21);

\path (V) edge (V1);
\path (V) edge (V2);
\path (V1) edge (V11);
\path (V1) edge (V12);
\path (V2) edge (V21);
\path (V2) edge (V22);

\end{tikzpicture}
}
\caption{DDC does not satisfy Axiom \ref{ax:struc}. Here, $DDC(u_1)>DDC(v_1)$ and $DDC(u_2)>DDC(v_2)$, but $DDC_u = DDC_v$.}
\label{fig:DDC}
\end{figure}
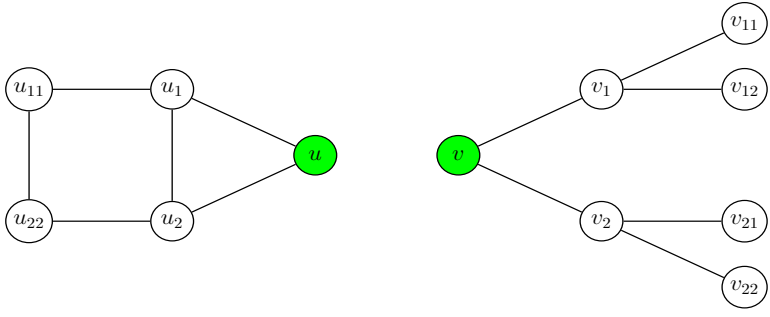

\begin{lemma}
DDC does not satisfy Axiom \ref{ax:struc}.
\end{lemma}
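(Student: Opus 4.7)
The plan is to exhibit the graph in Figure \ref{fig:DDC} as a counterexample, verifying the hypotheses of Axiom \ref{ax:struc} and then showing that its conclusion fails. Both incident-node sets $N_G(u) = \{u_1, u_2\}$ and $N_G(v) = \{v_1, v_2\}$ have size 2, so the side condition $|N_G(u)| \geq |N_G(v)|$ holds with $\bar{N}_G(u) = N_G(u)$, and the natural bijection is $h(u_i) = v_i$ for $i \in \{1,2\}$.

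Next, I would invoke Lemma \ref{lemma:DDC} to compare each pair $(u_i, v_i)$ via the hop-wise degree-sum sequences. At hop 0, both $u_1$ and $v_1$ have degree 3. At hop 1, $u_1$ sees $\{u, u_2, u_{11}\}$ whose degrees sum to $2+3+2 = 7$, whereas $v_1$ sees $\{v, v_{11}, v_{12}\}$ with degree sum $2+1+1 = 4$. Since the first discrepancy occurs at hop 1 and favors $u_1$, the lemma gives $DDC_{u_1} > DDC_{v_1}$, and by the structural symmetry of the two components $DDC_{u_2} > DDC_{v_2}$ follows identically. So the premise of Axiom \ref{ax:struc} is fully satisfied.

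Finally, I would show directly that $DDC_u = DDC_v$ by computing the hop-by-hop degree sums around each. Both $u$ and $v$ have degree 2 at hop 0; at hop 1 both see two degree-3 neighbors, contributing degree sum $6$; at hop 2, $u$ sees the two degree-2 nodes $\{u_{11}, u_{22}\}$ (total $4$) while $v$ sees the four degree-1 nodes $\{v_{11}, v_{12}, v_{21}, v_{22}\}$ (also total $4$); all further hops are empty in either component. Using the same value of $n$ (the whole graph is the disjoint union of both components) in the formula
\begin{align*}
DDC_u(G) \;=\; 2 + \frac{6}{n^2} + \frac{4}{n^4} \;=\; DDC_v(G),
\end{align*}
so equality holds. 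But Axiom \ref{ax:struc} demands the strict inequality $DDC_u(G) > DDC_v(G)$, contradiction.

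There is no real obstacle here; the only care needed is to pick components whose hop-$h$ degree sums agree at every $h$ for the two candidate nodes $u, v$ while differing strictly for a pair of corresponding neighbors. The figure is designed so that spreading the same total degree across more hop-2 neighbors on the $v$-side preserves the hop-2 degree sum but strictly lowers the neighbor centralities $DDC_{v_i}$ relative to $DDC_{u_i}$, which is precisely the gap Axiom \ref{ax:struc} rules out.
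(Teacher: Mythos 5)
Your proposal is correct and follows essentially the same route as the paper: the same counterexample graph of Figure \ref{fig:DDC}, the same hop-wise degree sums (7 vs.\ 4 at hop 1 for the neighbors, and matching sums $2, 6, 4$ at hops $0,1,2$ for $u$ and $v$), invoking Lemma \ref{lemma:DDC} for the strict neighbor inequalities and concluding $DDC_u = DDC_v$ contradicts Axiom \ref{ax:struc}. Your explicit closed-form check $DDC_u = 2 + 6/n^2 + 4/n^4 = DDC_v$ is just a slightly more detailed rendering of the paper's equality argument, not a different method.
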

\begin{proof}
Consider the network in Figure \ref{fig:DDC}. Clearly $degree(u_1) = degree(u_2) = degree(v_1) = degree(v_2) = 3$. But the sum of the degrees of nodes in the 1-hop neighborhood of $u_1$ and $u_2$ are 7 each, but the same for $v_1$ and $v_2$ are 4 each. Hence, $DDC(u_1)>DDC(v_1)$ and $DDC(u_2)>DDC(v_2)$, from Lemma \ref{lemma:DDC}. But sum of the degrees of the nodes in $h$-hop neighborhood of $u$ is same to that in the $h$-hop neighborhood of $v$, $\forall h=0,1,2$ and hence $DDC_u = DDC_v$. Hence DDC does not satisfy Axiom \ref{ax:struc}.
\end{proof}

\begin{table*}
    \caption{Axioms Satisfied by Different Centrality Measures}
	\centering
	\begin{adjustbox}{width=\textwidth}
	\begin{tabular}{*8c}
	\toprule
	Centrality Measures & Axiom 1 & Axiom 2 & Axiom 3 & Axiom 4 & Axiom 5 & Axiom 6 \\ \hline
	\midrule
	Uniform Centrality & \cmark & \cmark & \xmark & \xmark & \xmark & \xmark \\
	Degree Centrality & \cmark & \cmark & \cmark & \cmark & \xmark & \xmark \\
	Closeness Centrality & \cmark & \cmark & \cmark & \cmark & \xmark & \xmark \\
	Betweenness Centrality & \cmark & \cmark & \cmark & \xmark & \xmark & \xmark \\
	Weighted Degree Centrality & \cmark & \cmark & \cmark & \cmark & \xmark & \xmark \\
	Eigen Vector Centrality & \cmark & \xmark & \cmark & \xmark & \xmark & \cmark \\
	Decaying Degree Centrality & \cmark & \xmark & \cmark & \cmark & \cmark & \xmark \\
	\bottomrule
	\end{tabular}
	\end{adjustbox}
	\label{tab:sat}

	\end{table*} 
	
\section{Discussion and Future Work}
In this paper we proposed an axiomatic framework for the centrality measures for networks and analyzed some fundamental measures of centrality with respect to the proposed framework. Satisfiability or otherwise of  different axioms by different centrality measures are summarized in Table \ref{tab:sat}. Following are the key observations made from the last two sections.
\begin{itemize}
    \item We have proposed six axioms in total to capture different intrinsic properties of a centrality measure. They can be used to develop new centrality measures and for partial ranking of the existing measures. Though each axiom captures only a basic property of a centrality measure, surprisingly many well-known existing centrality measures could not satisfy many of them. We also proposed some new centrality measures in the process. But they also fail to satisfy all the axioms.
    \item One major contribution of the paper is the analysis of the centrality measures as shown in Table \ref{tab:sat}. But as one can see, degree centrality, closeness centrality and weighted degree centrality, though being significantly different in their definitions, satisfy the same set of axioms. Further investigation is required to understand their role in satisfying the fundamental properties of centrality.
    \item From Table \ref{tab:sat}, we still do not know whether there exist a centrality measure which can satisfy all the axioms of our framework. This would open up the scope of further research in getting some possibility or impossibility results in this direction.
\end{itemize}


\bibliographystyle{spbasic}      
\bibliography{centrality}   

\end{document}